\newtheorem{theorem}{Theorem}
\newtheorem{lemma}{Lemma}
\numberwithin{equation}{section}
\numberwithin{theorem}{section}
\numberwithin{definition}{section}
\numberwithin{lemma}{section}
\numberwithin{proposition}{section}
\numberwithin{example}{section}
\numberwithin{corollary}{section}
\numberwithin{remark}{section}
\newcommand{\eq}[1]{(\ref{eq:#1})}
\newcommand{\rem}[1]{\hyperref[rem:#1]{Remark~\ref*{rem:#1}}}
\newcommand{\thm}[1]{\hyperref[thm:#1]{Theorem~\ref*{thm:#1}}}
\newcommand{\cor}[1]{\hyperref[cor:#1]{Corollary~\ref*{cor:#1}}}
\newcommand{\defn}[1]{\hyperref[defn:#1]{Definition~\ref*{defn:#1}}}
\newcommand{\lem}[1]{\hyperref[lem:#1]{Lemma~\ref*{lem:#1}}}
\newcommand{\prop}[1]{\hyperref[prop:#1]{Proposition~\ref*{prop:#1}}}
\newcommand{\fig}[1]{\hyperref[fig:#1]{Figure~\ref*{fig:#1}}}
\newcommand{\tab}[1]{\hyperref[tab:#1]{Table~\ref*{tab:#1}}}
\newcommand{\algo}[1]{\hyperref[algo:#1]{Algorithm~\ref*{algo:#1}}}
\renewcommand{\sec}[1]{\hyperref[sec:#1]{Section~\ref*{sec:#1}}}
\newcommand{\append}[1]{\hyperref[append:#1]{Supplemental Materials~\ref*{append:#1}}}
\newcommand{\fac}[1]{\hyperref[fac:#1]{Fact~\ref*{fac:#1}}}
\newcommand{\lin}[1]{\hyperref[lin:#1]{Line~\ref*{lin:#1}}}
\def\>{\rangle}
\def\<{\langle}
\newcommand{\N}{\mathbb{N}}
\newcommand{\R}{\mathbb{R}}
\newcommand{\C}{\mathbb{C}}
\newcommand{\Q}{\mathbb{Q}}
\newcommand{\E}{\mathbb{E}}
\newcommand{\I}{\mathrm{I}}
\newcommand{\X}{\mathcal{X}}
\newcommand{\Y}{\mathcal{Y}}
\renewcommand{\P}{\mathcal{P}}
\renewcommand{\Q}{\mathcal{Q}}
\renewcommand{\H}{\mathcal{H}}
\newcommand{\D}{\mathcal{D}}
\renewcommand{\S}{\mathcal{S}}
\newcommand{\qW}{\mathrm{qW}}
\newcommand{\rank}{\mathrm{rank}}
\DeclareMathOperator{\poly}{poly}
\DeclareMathOperator{\sgn}{sgn}
\DeclareMathOperator{\SWAP}{\mathrm{SWAP}}
\renewcommand{\d}{\mathrm{d}}
\def\Tr{\operatorname{Tr}}
\newcommand{\range}[1]{[#1]}
\let\oldnl\nl
\newcommand{\nonl}{\renewcommand{\nl}{\let\nl\oldnl}}
\title{Quantum Wasserstein GANs}
\author{
  Shouvanik Chakrabarti$^{1,2,4,}$\thanks{Equal contribution.}, Yiming Huang$^{3,1,5,}$\footnotemark[1], Tongyang Li$^{1,2,4}$ \\
  \textbf{Soheil Feizi$^{2,4}$, Xiaodi Wu$^{1,2,4}$}\\
  $^1$ Joint Center for Quantum Information and Computer Science, University of Maryland \\
  $^2$ Department of Computer Science, University of Maryland \\
  $^3$  School of Information and Software Engineering\\
   University of Electronic Science and Technology of China \\
  $^4$ \texttt{\{shouv,tongyang,sfeizi,xwu\}@cs.umd.edu} \\
  $^5$ \texttt{yiminghwang@gmail.com}
}
\begin{document}
\maketitle
\begin{abstract}
The study of quantum generative models is well motivated, not only because of its importance in quantum machine learning and quantum chemistry but also because of the perspective of its implementation on near-term quantum machines. Inspired by previous studies on the adversarial training of classical and quantum generative models,  we propose the first design of quantum Wasserstein Generative Adversarial Networks (WGANs), which has been shown to improve the robustness and the scalability of the adversarial training of quantum generative models even on noisy quantum hardware.  Specifically, we propose a definition of the Wasserstein semimetric between quantum data, which inherits a few key theoretical merits of its classical counterpart.
We also demonstrate how to turn the quantum Wasserstein semimetric into a concrete design of quantum WGANs that can be efficiently implemented on quantum machines.
Our numerical study, via classical simulation of quantum systems, shows the more robust and scalable numerical performance of our quantum WGANs over other quantum GAN proposals.
As a surprising application, our quantum WGAN has been used to generate a 3-qubit quantum circuit of $\sim$50 gates that well approximates a 3-qubit 1-d Hamiltonian simulation circuit that requires over 10k gates using standard techniques.
\end{abstract}


\section{Introduction}
\label{sec:introduction}
\vspace{-1mm}
Generative adversarial networks (GANs) \cite{goodfellow2014generative} represent a power tool of training deep \emph{generative} models, which have a profound impact on machine learning. In GANs, a generator tries to generate fake samples resembling the true data, while a discriminator tries to discriminate between the true and the fake data. The learning process for generator and discriminator can be deemed as an adversarial game that converges to some equilibrium point under reasonable assumptions.

Inspired by the success of GANs and classical generative models, developing their quantum counterparts is a natural and important topic in the emerging field of quantum machine learning~\cite{schuld2015introduction, biamonte2017quantum}.
There are at least two appealing reasons for which quantum GANs are extremely interesting. First, quantum GANs could provide potential quantum speedups due to the fact that quantum generators and discriminators (i.e., parameterized quantum circuits) cannot be efficiently simulated by classical generators/discriminators. In other words, there might exist distributions that can be efficiently generated by quantum GANs, while otherwise impossible with classical GANs. Second, simple prototypes of quantum GANs (i.e., executing simple parameterized quantum circuits), similar to those of the variational methods (e.g.,~\cite{QAOA, NC-VQE, IBM-QE}), are likely to be implementable on near-term noisy-intermediate-scale-quantum (NISQ) machines~\cite{Preskill2018NISQ}. Since the seminal work of~\cite{lloyd18qgan}, there are quite a few proposals (e.g,~\cite{Killoran18qgan, benedetti2018adversarial, Zeng18, Situ18,hu2018quantum, Aspuru-Guzik19vqr, zoufal2019quantum}) of constructions of quantum GANs on how to encode quantum or classical data into this framework. Furthermore, \cite{hu2018quantum, zoufal2019quantum} also demonstrated proof-of-principle implementations of small-scale quantum GANs on actual quantum machines.

A lot of existing quantum GANs focus on using quantum generators to generate classical distributions. For truly quantum applications such as investigation of quantum systems in condensed matter physics or quantum chemistry, the ability to generate \emph{quantum data} is also important. In contrast to the case of classical distributions, where the loss function measuring the difference between the real and the fake distributions can be borrowed directly from the classical GANs, the design of the loss function between real and fake quantum data as well as the efficient training of the corresponding GAN is much more challenging. The only existing results on quantum data either have a unique design specific to the 1-qubit case~\cite{Killoran18qgan,hu2018quantum}, or suffer from robust training issues discussed below~\cite{benedetti2018adversarial}.

More importantly, classical GANs are well known for being delicate and somewhat unstable in training. In particular, it is known~\cite{arjovsky2017wasserstein} that the choice of the metric between real and fake distributions will be critical for the stability of the performance in the training. A few widely used ones such as the Kullback-Leibler (KL) divergence, the Jensen-Shannon (JS) divergence, and the total variation (or statistical) distance are not sensible for learning distributions supported by low dimensional generative models. The shortcoming of these metrics will likely carry through to their quantum counterparts and hence quantum GANs based on these metrics will likely suffer from the same weaknesses in training.
This training issue was not significant in the existing numerical study of quantum GANs in the 1-qubit case~\cite{Killoran18qgan,hu2018quantum}. However,  as observed by~\cite{benedetti2018adversarial} and us, the training issue becomes much more significant when the quantum system scales up, even just in the case of a few qubits.

To tackle the training issue of classical GANs, a lot of research has been conducted on the convergence of training GANs in classical machine learning.  A seminal work~\cite{arjovsky2017wasserstein} used \emph{Wasserstein distance} (or, \emph{optimal transport} distance)~\cite{villani2008optimal} as a metric for measuring the distance between real and fake distributions. Comparing to other measures (such as KL and JS), Wasserstein distance is more appealing from optimization perspective because of its continuity and smoothness. As a result, the corresponding Wasserstein GAN (WGAN) is promising for improving the training stability of GANs. There are a lot of subsequent studies on various modifications of the WGAN, such as GAN with regularized Wasserstein distance~\cite{sanjabi2018convergence}, WGAN with entropic regularizers~\cite{cuturi2013sinkhorn,seguy2017large}, WGAN with gradient penalty~\cite{gulrajani2017improved,petzka2017regularization}, relaxed WGAN~\cite{guo2017relaxed}, etc.
It is known~\cite{pmlr-v80-mescheder18a} that WGAN and its variants such as~\cite{gulrajani2017improved} have demonstrated improved training stability compared to the original GAN formulation.

\noindent \textbf{Contributions.\ \ }
Inspired by the success of classical Wasserstein GANs and the need of smooth, robust, and scalable training methods for quantum GANs on quantum data, we propose the first design of quantum Wasserstein GANs (qWGANs). To this end, our technical contributions are multi-folded.

In \sec{qW}, we propose a quantum counterpart of the Wasserstein distance, denoted by $\qW(P,Q)$ between quantum data $P$ and $Q$, inspired by~\cite{arjovsky2017wasserstein, villani2008optimal}. We prove that $\qW(\cdot, \cdot)$ is a semi-metric (i.e., a metric without the triangle inequality) over quantum data and inherits nice properties such as continuity and smoothness of the classical Wasserstein distance.
We will discuss a few other proposals of quantum Wasserstein distances such as~\cite{NGT15, Golse2016, peyr2016quantum, CARLEN20171810, chen2018matrix, chen2018wasserstein, yu2018quantum, qOT2019} and in particular why most of them are not suitable for the purpose of generating quantum data in GANs.
We will also discuss the limitation of our proposal of quantum Wasserstein semi-metric and hope its successful application in quantum GANs could provide another perspective and motivation to study this topic.

In \sec{qWGAN}, we show how to add the quantum \emph{entropic} regularization to $\qW(\cdot, \cdot)$ to further smoothen the loss function in the spirit of the classical case (e.g.,~\cite{sanjabi2018convergence}). We then show the construction of our regularized quantum Wasserstein GAN (qWGAN) in \fig{QWGAN-summary} and describe the configuration and the parameterization of both the generator and the discriminator. Most importantly, we show that the evaluation of the loss function and the evaluation of the gradient of the loss function can be in principle efficiently implemented on quantum machines.  This enables direct applications of classical training methods of GANs, such as alternating gradient-based optimization, to the quantum setting. It is a wide belief that classical computation cannot efficiently simulate quantum machines, in our case, the evaluation of the loss function and its gradient. Hence, the ability of evaluating them efficiently on quantum machines is \emph{critical} for its scalability.

In \sec{experiments}, we supplement our theoretical results with experimental validations via classical simulation of qWGAN. Specifically, we demonstrate numerical performance of our qWGAN for quantum systems up to 8 qubits for pure states and up to 3 qubits for mixed states (i.e., mixture of pure states). Comparing to existing results~\cite{Killoran18qgan,hu2018quantum, benedetti2018adversarial}, our numerical performance is more favorable in both the system size and its numerical stability.
To give a rough sense, a single step in the classical simulation of the 8-qubit system involves multiple multiplications of $2^8 \times 2^8$ matrices. Learning a mixed state is much harder than learning pure states (a reasonable classical analogue of their difference is the one between learning a Gaussian distribution and learning a mixture of Gaussian distributions~\cite{mixture-learning}).
We present the only result for learning a true mixed state up to 3 qubits.

Furthermore, following a specific 4-qubit generator that is recently implemented on an ion-trap quantum machine~\cite{ionq} and a reasonable noise model on the same machine~\cite{Zhu}, we simulate the performance of our qWGAN with noisy quantum operations.  Our result suggests that qWGAN can tolerant a reasonable amount of noise in quantum systems and still converge. This shows the possibility of implementing qWGAN on near-term (NISQ) machines~\cite{Preskill2018NISQ}.

Finally, we demonstrate a real-world application of qWGAN to approximate useful quantum application with large circuits by small ones. qWGAN can be used to approximate a potentially complicated unknown quantum state by a simple one when using a reasonably simple generator.
We leverage this property and the Choi-Jamio\l{}kowski isomorphism~\cite{nielsen2002quantum} between quantum operations and quantum states to generate a simple state that approximates another Choi-Jamio\l{}kowski state corresponding to potentially complicated circuits in real quantum applications.
The closeness in two Choi-Jamio\l{}kowski states of quantum circuits will translate to the average output closeness between two quantum circuits over random input states.
Specifically, we show that the quantum Hamiltonian simulation circuit for 1-d 3-qubit Heisenberg model in~\cite{childs2018towards} can be approximated by a circuit of 52 gates with an average output fidelity over 0.9999 and a worst-case error 0.15.
The best-known circuit based on the product formula will need $\sim$11900 gates, however, with a worst-case error 0.001.

\noindent \textbf{Related results.\ \ }
All existing quantum GANs~\cite{lloyd18qgan, Killoran18qgan, benedetti2018adversarial, Zeng18, Situ18,hu2018quantum, Aspuru-Guzik19vqr, zoufal2019quantum}, no matter dealing with classical or quantum data, have not investigated the possibility of using the Wasserstein distance. The most relevant works to ours are~\cite{Killoran18qgan, benedetti2018adversarial,hu2018quantum} with specific GANs dealing with quantum data. As we discussed above, ~\cite{Killoran18qgan,hu2018quantum} only discussed the 1-qubit case (both pure and mixed) and~\cite{benedetti2018adversarial} discussed the pure state case (up to 6 qubits) but with the loss function being the quantum counterpart of the total variation distance. Moreover, different from ours, the mixed state case in~\cite{Killoran18qgan} is a labeled one: in addition to observing their mixture, one also gets a label of which pure state it is sampled from.

\begin{figure}[!htb]
  \minipage[b]{0.48\textwidth}%
  \centering\hspace{0mm}
\Qcircuit @C=1em @R=1em {
   \lstick{\vec{e}_0}& \gate{\{(p_i,U_i)\}} & \measureD{\phi} &\multicgate{3}{L} \\
  \lstick{Q} & \qw & \measureD{\psi} & \cghost{L} \\
   \lstick{\vec{e}_0}& \gate{\{(p_i,U_i)\}} & \multimeasureD{1}{\xi_R} & \cghost{L}\\
  \lstick{Q} & \qw & \ghost{\xi_R} & \cghost{L}}
  \endminipage\hfill
  \minipage[b]{0.48\textwidth}%
  \centering\hspace{0mm}
\scalebox{0.9}{\Qcircuit @C=1em @R=1em {
   & \gate{R_{\sigma_1}(\theta_1)} & \multigate{1}{R_{\sigma_4}(\theta_4)} & \qw & \qw \\
   & \gate{R_{\sigma_2}(\theta_2)} & \ghost{R_{\sigma_4}(\theta_4)} & \multigate{1}{R_{\sigma_5}(\theta_5)} & \qw \\
  & \gate{R_{\sigma_3}(\theta_3)} & \qw & \ghost{R_{\sigma_4}(\theta_4)} & \qw \gategroup{1}{2}{3}{2}{0.7em}{--} \gategroup{1}{2}{3}{4}{0.7em}{--} \\}}
\endminipage\hfill

\minipage[t]{0.48\textwidth}
\centering
\caption*{(1) $\{(p_i, U_i)\}$ refers to the generator with initial state $\vec{e}_0$ and its parameterization; (2) $\phi, \psi, \xi_R$ refers to the discriminator; (3) the figure shows how to evaluate the loss function $L$ by measuring $\phi, \psi, \xi_R$ on the generated state and the real state $Q$ with post-processing.}
\endminipage\hfill
\minipage[t]{0.48\textwidth}
  \centering
  \caption*{An example of a parameterized 3-qubit quantum circuit for $U_i$ in the generator. $R_{\sigma_i}(\theta_i)=\exp(\frac{1}{2}\theta_i \sigma_i)$ denotes a Pauli rotation with angle $\theta_i$. It could be a 1-qubit or 2-qubit gate depending on the specific Pauli matrix $\sigma_i$. The circuit consists of many such gates.}
\endminipage
\caption{The Architecture of Quantum Wasserstein GAN.}
\label{fig:QWGAN-summary}
\end{figure}


\section{Classical Wasserstein Distance \& Wasserstein GANs}
\label{sec:classical}
Let us first review the definition of Wasserstein distance and how it is used in classical WGANs.

\noindent \textbf{Wasserstein distance\ \ }
Consider two probability distributions $p$ and $q$ given by corresponding density functions $p\colon \X \to \R, q\colon \Y \to \R$. Given a cost function $c \colon \mathcal{X} \times \mathcal{Y} \to \R$, the optimal transport cost between $p$ and $q$, known as the \emph{Kantorovich's} formulation~\cite{villani2008optimal}, is defined as
\begin{align}
  \label{eq:class-wass-primal}
  d_c(p,q) := \min_{\pi \in\Pi(p, q)} & \int_\mathcal{X}\int_\mathcal{Y} \pi(x,y)c(x,y) \,\d x\,\d y
\end{align}
where $\Pi(p,q)$ is the set of joint distributions $\pi$ having marginal distributions $p$ and $q$, i.e., $\int_\Y \pi(x,y)\,\d y = p(x)$ and $\int_\X \pi(x,y) \,\d x = q(y)$.

\noindent \textbf{Wasserstein GAN\ \ } The Wasserstein distance $d_c(p,q)$ can be used as an objective for learning a real distribution $q$ by a parameterized function $G_\theta$ that acts on a base distribution $p$. Then the objective becomes learning parameters $\theta$ such that $d_c(G_\theta(p),q)$ is minimized as follows:
\begin{align}
  \label{eq:class-wass-opt}
  \min_\theta \min_{\pi \in\Pi(\P,\Q)} & \int_\mathcal{X}\int_\mathcal{Y} \pi(x,y)c(G_\theta(x),y) \,\d x\,\d y.
\end{align}
  In \cite{arjovsky2017wasserstein}, Arjovsky et al. propose using the dual of \eq{class-wass-opt} to formulate the original min-min problem into a min-max problem, i.e., a generative adversarial network, with the following form:
\begin{align}
    \label{eq:class-wass-gan}
    \min_\theta \max_{\alpha,\beta} \quad & \E_{x \sim \P}[\phi_\alpha(x)] - \E_{y \sim \Q}[\psi_\beta(y)], \\
    \text{s.t } \quad & \phi_\alpha(G_\theta(x)) - \psi_\beta(y) \le c(G_\theta(x),y),\ \forall x,y, \label{eq:class-wass-dual-constraint}
\end{align}
where $\phi,\psi$ are functions parameterized by $\alpha,\beta$ respectively. This is advantageous because it is usually easier to parameterize functions rather than joint distributions. The constraint \eq{class-wass-dual-constraint} is usually enforced by a regularizer term for actual implementation. Out of many choices of regularizers, the most relevant one to ours is the entropy regularizer in~\cite{sanjabi2018convergence}. In the case that $c(x,y) = \lVert x-y \rVert_2$ and $\phi = \psi$ in \eq{class-wass-gan}, the constraint is that $\phi$ must be a $1$-Lipschitz function. This is often enforced by the gradient penalty method in a neural network used to parameterize $\phi$.


\section{Quantum Wasserstein Semimetric}
\label{sec:qW}

\noindent \textbf{Mathematical formulation of quantum data\ \ }
We refer curious readers to \append{prelim} for a more comprehensive introduction. Any quantum data (or quantum states) over space $\X$ (e.g., $\X=\C^d$) can be mathematically described by a \emph{density operator} $\rho$ that is a \emph{positive semidefinite} matrix (i.e., $\rho \succeq 0$) with trace one (i.e., $\Tr(\rho)=1$), and the set of which is denoted by $\D(\X)$.

A quantum state $\rho$ is \emph{pure} if $\rank{(\rho)}=1$; otherwise it is a \emph{mixed} state.
For a pure state $\rho$, it can be represented by the outer-product of a \emph{unit} vector $\vec{v} \in \C^d$, i.e., $\rho=\vec{v}\vec{v}^\dagger$, where $\dagger$ refers to conjugate transpose.
We can also use $\vec{v}$ to directly represent pure states. Mixed states are a classical mixture of pure states, e.g., $\rho=\sum_i p_i \vec{v_i}\vec{v_i}^\dagger$ where $p_i$s form a classical distribution and $\vec{v_i}$s are all unit vectors.

Quantum states in a composed system of $\X$ and $\Y$ are represented by density operators $\rho$ over the Kronecker-product space $\X \otimes \Y$ with dimension $\dim(\X)\dim(\Y)$. 1-qubit systems refer to $\X=\C^2$. A 2-qubit system has dimension 4 ($\X^{\otimes 2}$) and an $n$-qubit system has dimension $2^n$.
The partial trace operation $\Tr_\X(\cdot)$ (resp. $\Tr_\Y(\cdot)$) is a linear mapping from $\rho$ to its marginal state on $\Y$ (resp. $\X$).

\noindent \textbf{From classical to quantum data\ \ }  Classical distributions $p, q$ in \eq{class-wass-primal} can be viewed as special mixed states $\P \in \D(\X), \Q \in \D(\Y)$ where $\P, \Q$ are diagonal and $p, q$ (viewed as density vectors) are the diagonals of $\P$, $\Q$ respectively.  Note that this is different from the conventional meaning of samples from classical distributions, which are random variables with the corresponding distributions.

This distinction is important to understand quantum data as the former (i.e., density operators) rather than the latter (i.e., samples) actually represents the entity of quantum data. This is because there are multiple ways (different quantum measurements) to read out classical samples out of quantum data for one fixed density operator. Mathematically, this is because density operators in general can have off-diagonal terms and quantum measurements can happen along arbitrary bases.

Consider $\X$ and $\Y$ from \eq{class-wass-primal} being finite sets. We can express the classical Wasserstein distance \eq{class-wass-primal} as a special case of the matrix formulation of quantum data. Precisely, we can replace the integral in \eq{class-wass-primal} by summation, which can be then expressed by the trace of $\pi C$ where $C$ is a diagonal matrix with $c(x,y)$ in the diagonal. $\pi$ is also a diagonal matrix expressing the coupling distribution $\pi(x,y)$ of $p, q$. Namely, $\pi$'s diagonal is $\pi(x,y)$ and satisfies the coupling marginal condition $\Tr_\Y(\pi)=P$ and $\Tr_\X(\pi)=Q$ where $P,Q$ are diagonal matrices with the distribution of $p, q$ in the diagonal, respectively.  As a result,  the Kantorovich's optimal transport in \eq{class-wass-primal} can be reformulated as
\begin{align}\label{eq:matrix_Kantorovich}
  &d_c(p,q)  := \min_{\pi} \Tr(\pi C) \\
  \text{s.t.} \quad &\Tr_\Y (\pi) = \mathrm{diag}\{p(x)\},\ \Tr_\X (\pi )= \mathrm{diag}\{q(y)\},\ \pi \in \D(\X \otimes \Y), \nonumber
\end{align}
where $C=\mathrm{diag} \{c(x,y)\}$. Note that \eq{matrix_Kantorovich} is effectively a linear program.

\noindent \textbf{Quantum Wasserstein semimetric\ \ } Our matrix reformulation of the classical Wasserstein distance \eq{class-wass-primal} suggests a naive extension to the quantum setting as follows.  Let $\qW(\P, \Q)$ denote the quantum Wasserstein semimetric between $\P \in \D(\X), \Q \in \D(\Y)$, which is defined by
\begin{align}\label{eq:quant-wass-vanilla}
 &\qW(\mathcal{P},\mathcal{Q}) := \min_{\pi} \Tr(\pi C) \\
  \text{s.t.} \quad &\Tr_\Y (\pi) = \P,\ \Tr_\X (\pi )= \Q,\ \pi \in \D(\X \otimes \Y), \nonumber
\end{align}
where $C$ is a matrix over $\X \otimes \Y$ that should refer to some cost-type function. The choice of $C$ is hence critical to make sense of the definition. First, matrix $C$ needs to be Hermitian (i.e., $C=C^\dagger$) to make sure that $\qW(\cdot, \cdot)$ is real. A natural attempt is to use $C=\mathrm{diag}\{c(x,y)\}$ from \eq{matrix_Kantorovich}, which turns out to be significantly wrong.
This is because $\qW(\vec{v}\vec{v}^\dagger, \vec{v}\vec{v}^\dagger)$ will be strictly greater than zero for random choice of unit vector $\vec{v}$ in that case. This demonstrates a crucial difference between classical and quantum data: \emph{while classical information is always stored in the diagonal (or computational basis) of the space, quantum information can be stored off-diagonally (or in an arbitrary basis of the space)}. Thus, choosing a diagonal $C$ fails to detect the off-diagonal information in quantum data.

Our proposal is to leverage the concept of \emph{symmetric subspace} in quantum information~\cite{harrow13church} to make sure that $\mathrm{qW}(P,P)=0$ for any $P$. The projection onto the symmetric subspace is defined by
\begin{align}
\Pi_{\text{sym}}:=\frac{1}{2}(\I_{\X \otimes \Y} + \mathrm{SWAP}),
\end{align}
where $\I_{\X \otimes \Y}$ is the identity operator over $\X \otimes \Y$ and $\SWAP$ is the operator such that $\SWAP(\vec{x} \otimes \vec{y}) = (\vec{y} \otimes \vec{x}), \forall \vec{x} \in \X, \vec{y} \in \Y$.\footnote{One needs that $\X$ is isometric to $\Y$ to well define $\Pi_{\text{sym}}$. However, this is without loss of generality by choosing appropriate and potentially larger spaces $\X$ and $\Y$ to describe quantum data.}
It is well known that $\Pi_{\text{sym}}(\vec{u} \otimes \vec{u}) = \vec{u} \otimes \vec{u}$ for all unit vectors $u$. With this property and by choosing $C$ to be the complement of $\Pi_{\text{sym}}$, i.e.,
\begin{align}
C:=\I_{\X \otimes \Y} - \Pi_{\text{sym}} = \frac{1}{2}(\I_{\X \otimes \Y} - \SWAP),
\end{align}
we can show $\mathrm{qW}(P,P)=0$ for any $P$. This is achieved by choosing $\pi= \sum_i \lambda_i (\vec{v_i}\vec{v_i}^\dagger \otimes \vec{v_i}\vec{v_i}^\dagger)$ given $P$'s spectral decomposition $P= \sum_i \lambda_i \vec{v_i}\vec{v_i}^\dagger$. Moreover, we can show

\vspace{1mm}
\begin{theorem}[Proof in \append{qW-proof}]
    \label{thm:qwass-properties-main}
    $\qW(\cdot,\cdot)$ forms a semimetric over $\D(\X)$ over any space $\X$, i.e., for any $\P, \Q \in \D(\X)$,
    \begin{enumerate}
    \item $\qW(\P,\Q) \ge 0$,
    \item $\qW(\P, \Q)=\qW(\Q, \P)$,
    \item $\qW(\P,\Q) = 0$ iff $\P = \Q$.
     \end{enumerate}
 \end{theorem}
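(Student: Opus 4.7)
The plan is to verify the three axioms by exploiting a single structural observation about the cost matrix $C$. Because $\SWAP$ is Hermitian and squares to the identity, $C = \frac{1}{2}(\I_{\X \otimes \Y} - \SWAP)$ is the orthogonal projector onto the antisymmetric subspace; in particular $C \succeq 0$, and $\ker C$ equals the symmetric subspace. Every subsequent step will reduce to a statement about this projector and about how $\SWAP$ interacts with the partial trace.

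For nonnegativity, I would simply note that both $\pi$ and $C$ are positive semidefinite, so $\Tr(\pi C) = \Tr(C^{1/2} \pi C^{1/2}) \ge 0$, and feasibility of the program is guaranteed by the product coupling $\P \otimes \Q$. For symmetry, the plan is to record two elementary identities: (i) $\SWAP\, C\, \SWAP = C$, so conjugating a coupling by $\SWAP$ leaves its cost unchanged; and (ii) a direct computation in a product basis shows $\Tr_\Y(\SWAP \pi \SWAP) = \Tr_\X(\pi)$ (and vice versa). Together these yield that $\pi \mapsto \SWAP \pi \SWAP$ is a cost-preserving bijection between couplings feasible for $\qW(\P,\Q)$ and for $\qW(\Q,\P)$, establishing $\qW(\P,\Q) = \qW(\Q,\P)$.

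The identity-of-indiscernibles axiom has two directions. For $\qW(\P,\P) = 0$ I will use exactly the coupling already hinted at in the text: given a spectral decomposition $\P = \sum_i \lambda_i \vec v_i \vec v_i^\dagger$, set $\pi := \sum_i \lambda_i (\vec v_i \vec v_i^\dagger) \otimes (\vec v_i \vec v_i^\dagger)$. A short check confirms $\pi$ is a valid coupling and each rank-one summand sits in the symmetric subspace, hence is annihilated by $C$. For the converse, suppose $\qW(\P,\Q) = 0$ is attained at some minimizer $\pi$. Then $\Tr(C^{1/2} \pi C^{1/2}) = 0$ with $C^{1/2} \pi C^{1/2} \succeq 0$, forcing $C^{1/2} \pi C^{1/2} = 0$ and hence that $\pi$ is supported on $\ker C$, the symmetric subspace. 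Consequently $\SWAP\, \pi = \pi\, \SWAP = \pi$, and applying the partial-trace identity from the previous paragraph gives
\begin{equation*}
\P \;=\; \Tr_\Y(\pi) \;=\; \Tr_\Y(\SWAP\, \pi\, \SWAP) \;=\; \Tr_\X(\pi) \;=\; \Q.
\end{equation*}

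The only real obstacle I anticipate is bookkeeping: $\Pi_{\text{sym}}$, and thus $C$, is defined only after identifying $\X$ with $\Y$, which is automatic here since the statement takes both arguments in $\D(\X)$; and one must confirm that a minimizer actually exists so that ``$\qW(\P,\Q) = 0$'' can be realized by a witnessing $\pi$. Existence is standard, since the feasibility set is a compact convex subset of Hermitian matrices with prescribed marginals and the cost is linear. Once the structural observation that $C$ is the antisymmetric projector is in hand, the rest is routine linear algebra.
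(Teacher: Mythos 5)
Your proposal is correct, and where it overlaps with the paper's proof it takes essentially the same route: the same direct verification that $C \succeq 0$ (you identify $C$ as the antisymmetric projector, the paper computes $\vec{u}^\dagger C \vec{u} = \sum_{i \le j}|u_{ij}-u_{ji}|^2$ explicitly, but these are the same fact), and the identical witnessing coupling $\pi_0 = \sum_i \lambda_i (\vec v_i \vec v_i^\dagger \otimes \vec v_i \vec v_i^\dagger)$ for $\qW(\P,\P)=0$. Where you genuinely diverge is that you prove more than the paper does. The paper dismisses symmetry as holding ``trivially'' from the definition, but the constraints $\Tr_\Y(\pi)=\P$ and $\Tr_\X(\pi)=\Q$ are not literally symmetric in the two arguments; your observation that $\pi \mapsto \SWAP\,\pi\,\SWAP$ is a cost-preserving bijection between the two feasible sets (using $\SWAP\, C\, \SWAP = C$ and $\Tr_\Y(\SWAP\,\pi\,\SWAP) = \Tr_\X(\pi)$) is the argument the paper implicitly relies on but never writes down. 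More significantly, the paper proves only the ``if'' direction of item 3 ($\P=\Q \Rightarrow \qW(\P,\Q)=0$) and omits the converse entirely. Your argument for the converse --- that $\Tr(\pi C)=0$ with $\pi, C \succeq 0$ forces $\pi$ to be supported on $\ker C$, hence $\SWAP\,\pi = \pi\,\SWAP = \pi$, hence $\P = \Tr_\Y(\pi) = \Tr_\X(\pi) = \Q$ --- is correct and is exactly what is needed to justify the ``iff'' as stated; your remark that a minimizer exists (compact feasible set, linear objective) properly closes the one loophole in that step. In short, your proof is a strict completion of the paper's.
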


Even though our definition of $\mathrm{qW}(\cdot, \cdot)$, especially the choice of $C$, does not directly come from a cost function $c(x,y)$ over $\X$ and $\Y$,  it however still encodes some geometry of the space of quantum states. For example, let $P=\vec{v}\vec{v}^\dagger$ and $Q=\vec{u}\vec{u}^\dagger$, $\mathrm{qW}(P, Q)$ becomes 0.5 $(1-|\vec{u}^\dagger\vec{v}|^2)$ where $|\vec{u}^\dagger\vec{v}|$ depends on the angle between $\vec{u}$ and $\vec{v}$ which are unit vectors representing (pure) quantum states.

\noindent \textbf{The dual form of $\mathrm{qW}(\cdot, \cdot)$\ \ } The formulation of $\mathrm{qW}(\cdot, \cdot)$ in \eq{quant-wass-vanilla} is given by a semidefinite program (SDP), opposed to the classical form in \eq{matrix_Kantorovich} given by a linear program. Its dual form is as follows.
\begin{align}
  \label{eq:quant-wass-dual}
  \max_{\phi,\psi} \quad & \Tr(Q \psi) - \Tr(P \phi) \\
  \text{s.t.} \quad & \I_\X \otimes \psi - \phi \otimes \I_\Y \preceq C,
                  \phi \in \H(\X),\ \psi \in \H(\Y),  \nonumber
\end{align}
where $\H(\X) , \H(\Y)$ denote the set of Hermitian matrices over space $\X$ and $\Y$.
We further show the \emph{strong duality} for this SDP in \append{qW-proof}. Thus, both the primal \eq{quant-wass-vanilla} and the dual \eq{quant-wass-dual} can be used as the definition of $\qW(\cdot, \cdot)$.

\noindent \textbf{Comparison with other quantum Wasserstein metrics\ \ } There have been a few different proposals that introduce matrices into the original definition of classical Wasserstein distance. We will compare these definitions with ours and discuss whether they are appropriate in our context of quantum GANs.

A few of these proposals (e.g.,~\cite{Carlen2014, chen2018matrix,chen2018wasserstein}) extend the dynamical formulation of Benamou and Brenier~\cite{Benamou2000} in optimal transport  to the matrix/quantum setting. In this formulation, couplings are defined not in terms of joint density measures, but in terms of smooth paths $t \rightarrow \rho(x,t)$ in the space of densities that satisfy some continuity equation with some time dependent vector field $v(x,t)$ inspired by physics.  A pair $\{ \rho(\cdot, \cdot), v(\cdot, \cdot)\}$ is said to couple $P$ and $Q$, the set of which is denoted  $C(P, Q)$, if $\rho(x,t)$ is a smooth path with $\rho(\cdot,0)=P$ and $\rho(\cdot, 1)=Q$. The 2-Wasserstein distance is
\begin{align}
  \mathrm{W}_2(P, Q) = \inf_{ \{\rho(\cdot, \cdot), v(\cdot, \cdot)\} \in C(P,Q)} \frac{1}{2} \int_0^1 \int_{R^n} |v(x,t)|^2 \rho(x,t)\,\d x\,\d t.
\end{align}
The above formulation seems difficult to manipulate in the context of GAN. It is unclear (a) whether the above definition has a favorable duality to admit the adversarial training and (b) whether the physics-inspired quantities like $v(x,t)$ are suitable for the purpose of generating fake quantum data.

A few other proposals (e.g.,~\cite{NGT15, peyr2016quantum}) introduce the matrix-valued mass defined by a function $\mu : X \rightarrow C^{n \times n}$ over domain $X$, where $\mu(x)$ is  positive semidefinite and satisfies $\Tr(\int_X \mu(x) dx)=1$. Instead of considering transport probability masses from  $X$ to $Y$,  one considers transporting a matrix-valued mass $\mu_0(x)$ on $X$ to another matrix-valued mass $\mu_1(y)$ on $Y$.
One can similarly define the Kantorovich's coupling $\pi(x,y)$ of $\mu_0(x)$ and $\mu_1(y)$,  and define the Wasserstein distance based on a slight different combination of $\pi(x,y)$ and $c(x,y)$ comparing to \eq{class-wass-primal}. 
This definition, however, fails to derive a new metric between two matrices. This is because the defined Wasserstein distance still measures the distance between $X$ and $Y$ based on some induced measure ($\| \cdot \|_F$) on the dimension-$n$ matrix space.  This is more evident when $X= \{P\}$ and $Y=\{Q\}$. The Wasserstein distance reduces to $c(x,y)+ \| P-Q\|_F^2$ where  the Frobenius norm ($\| \cdot\|_F$) is directly used in the definition. 

The proposals in \cite{Golse2016, qOT2019} are very similar to us in the sense they define the same coupling in the Kantorovich's formulation as ours. However, their definition of the Wasserstein distance motivated by physics is induced by unbounded operator applied on continuous space, e.g., $\nabla_x$, $\mathrm{div}_x$. This makes their definition only applicable to continuous space, rather than qubits in our setting.

The closest result to ours is \cite{yu2018quantum}, although the authors haven't proposed one concrete quantum Wasserstein metric. Instead, they formulate a general form of reasonable quantum Wasserstein metrics between finite-dimensional quantum states and prove that Kantorovich-Rubinstein theorem does not hold under this general form. Namely, they show the trace distance between quantum states cannot be determined by any quantum Wasserstein metric out of their general form. 

\noindent \textbf{Limitation of our $\qW(\cdot, \cdot)$\ \ } Although we have successfully implemented qWGAN based on our $\qW(\cdot, \cdot)$ and observed improved numerical performance, there are a few perspectives about  $\qW(\cdot, \cdot)$ worth further investigation. First, numerical study reveals that $\qW(\cdot, \cdot)$ does not satisfy the triangle inequality. 
Second, our $\qW(\cdot, \cdot)$ does not come from an explicit cost function, even though it encodes some geometry of the quantum state space. 
We conjecture that there could be a concrete underlying cost function and our $\qW(\cdot, \cdot)$ (or a related form) could be emerged as the 2-Wasserstein metric of that cost function. 
We hope our work provides an important motivation to further study this topic.


\section{Quantum Wasserstein GAN}
\label{sec:qWGAN}
We describe the specific architecture of our qWGAN (\fig{QWGAN-summary}) and its training. Similar to \eq{class-wass-opt} with the fake state $P$ from a parameterized quantum generator $G$, consider
\begin{align}\label{eq:quant-wgan-primal}
 \min_G \min_{\pi} \quad &\Tr(\pi C) \\
  \text{s.t.} \quad &\Tr_\Y (\pi) = P,   \Tr_\X (\pi )= Q ,  \pi \in \D(\X \otimes \Y), \nonumber
\end{align}
or similar to \eq{class-wass-gan} by taking the dual from \eq{quant-wass-dual},
\begin{align}
  \label{eq:quant-wgan}
  \min_G\max_{\phi,\psi} \quad & \Tr(Q \psi) - \Tr(P \phi) = \E_Q[\psi] - \E_P[\phi]\\
  \text{s.t.} \quad & \I_\X \otimes \psi - \phi \otimes \I_\Y \preceq C,
                  \phi \in \H(\X),\ \psi \in \H(\Y), \nonumber
\end{align}
where we abuse the notation of $\E_Q[\psi]\vcentcolon= \Tr(Q \psi)$, which refers to the expectation of the outcome of measuring Hermitian $\psi$ on quantum state $Q$.
We hence refer $\phi,\psi$ as the discriminator.

\vspace{-1mm}
\subsection*{Regularized Quantum Wasserstein GAN}\vspace{-1mm}
\label{sec:regul-quant-wass}
The dual form \eq{quant-wgan} is inconvenient for optimizing directly due to the constraint $\I_\X \otimes \psi - \phi \otimes \I_\Y \preceq C$.
Inspired by the entropy regularizer in the classical setting (e.g.,~\cite{sanjabi2018convergence}), we add a \emph{quantum-relative-entropy-based} regularizer
between $\pi$ and $P \otimes Q$ with a tunable parameter $\lambda$ to \eq{quant-wgan-primal} to obtain
\begin{align}
  \label{eq:qwgan-primal-regularized}
  \min_G \min_{\pi} \quad &\Tr(\pi C) + \lambda\Tr(\pi \log(\pi) - \pi\log(P \otimes Q)) \\
  \text{s.t.} \quad &\Tr_\Y (\pi) = P,   \Tr_\X (\pi )= Q ,  \pi \in \D(\X \otimes \Y). \nonumber
\end{align}
Using duality and the Golden-Thomposon inequality~\cite{golden1965lower,thompson1965inequality}, we can approximate \eq{qwgan-primal-regularized} by
\begin{align}
  \label{eq:qwgan-regularized}
\min_G  \max_{\phi,\psi} \quad  \E_Q[\psi] - \E_P [\phi] - \E_{P \otimes Q}[\xi_R]
\quad   \text{s.t.}\ \  \phi \in \H(\X),\ \psi \in \H(\Y),
\end{align}
where $\xi_R$ refers to the regularizing Hermitian
\begin{align}
  \xi_R= \frac{\lambda}{e}\exp\left(\frac{ -C - \phi \otimes \I_\Y +\I_\X \otimes \psi}{\lambda} \right).
\end{align}
Similar to~\cite{sanjabi2018convergence}, we prove that this entropic regularization ensures that the objective for the outer minimization problem \eq{qwgan-regularized} is \emph{differentiable} in $P$. (Proofs are given in \append{regul-quant-wass-appendix}.)

\vspace{-1mm}
\subsection*{Parameterization of the Generator and the Discriminator}
\label{sec:param-gener-discr}
\vspace{-2mm}

\noindent \textbf{Generator $G$} is a quantum operation that generates $P$ from a fixed initial state $\rho_0$ (e.g., the classical all-zero state $\vec{e}_0$). Specifically, generator $G$ can be described by an ensemble $\{(p_1,U_1),\dots,(p_r,U_r)\}$ that means applying the unitary $U_i$ with probability $p_i$. The distribution $\{p_1, \ldots, p_r\}$ can be parameterized directly or through some classical generative network. The rank of the generated state is $r$ ($r=1$ for pure states and $r>1$ for mixed states). Our experiments include the cases $r=1,2$.

Each unitary $U_i$ refers to a quantum circuit consisting of simple parameterized 1-qubit and 2-qubit Pauli-rotation quantum gates (see the right of \fig{QWGAN-summary}). These Pauli gates can be implemented on near-term machines (e.g.,~\cite{ionq}) and also form a universal gate set for quantum computation. Hence, this generator construction is widely used in existing quantum GANs. The $j$th gate in $U_i $ contains an angle $\theta_{i,j}$ as the parameter. All variables $p_i$, $\theta_{i,j}$ constitute the set of parameters for the generator.

\noindent \textbf{Discriminator $\phi, \psi$} can be parameterized at least in two ways. The first approach is to represent $\phi, \psi$ as linear combinations of tensor products of Pauli matrices, which form a basis of the matrix space (details on Pauli matrices and measurements can be found in \append{prelim}). Let $\phi = \sum_k \alpha_k A_k$ and $\psi = \sum_l \beta_l B_l$, where $A_k,B_l$ are tensor products of Pauli matrices. To evaluate $\E_P[\phi]$ (similarly for $\E_Q[\psi]$), by linearity it suffices to collect the information of $\E_P[A_k]$s,  which are simply Pauli measurements on the quantum state $P$ and amenable to experiments. Hence, $\alpha_k$ and $\beta_l$ can be used as the parameters of the discriminator.
The second approach is to represent $\phi, \psi$ as parameterized quantum circuits (similar to the $G$) with a measurement in the computational basis. The set of parameters of $\phi$ (respectively $\psi$) could be the parameters of the circuit and values associated with each measurement outcome. Our implementation mostly uses the first parameterization.

\vspace{-2mm}
\subsection*{Training the Regularized Quantum Wasserstein GAN}
\label{sec:train-regul-quant}
\vspace{-2mm}

For the scalability of the training of the Regularized Quantum Wasserstein GAN, one must be able to evaluate the loss function $L = \E_Q[\psi] - \E_P[\phi] - \E_{P \otimes Q}[\xi_R]$ or its gradient efficiently on a quantum computer. Ideally, one would hope to directly approximate gradients by quantum computers to facilitate the training of qWGAN, e.g., by using the alternating gradient descent method.  We show that it is indeed possible and outline the key steps.  More details are in \append{impl-disc-phys}.

\noindent \textbf{Computing the loss function:} Each unitary operation $U_i$ that refers to an actual quantum circuit can be efficiently evaluated on quantum machines in terms of the circuit size.
It can be shown that $L$ is a linear function of $P$ and can be computed by evaluating each $L_i=\E_Q[\psi] - \E_{U_i\rho_0 U_i^\dagger}[\phi] - \E_{U_i\rho_0 U_i^\dagger \otimes Q}[\xi_R]$
where $U_i\rho_0 U_i^\dagger$ refers to the state after applying $U_i$ on $\rho_0$. Similarly, one can show that $L$ is a linear function of the Hermitian matrices $\phi,\psi,\xi_R$. Our parameterization of $\phi$ and $\psi$ readily allows the use of efficient Pauli measurements to evaluate $\E_P[\phi]$ and $\E_Q[\psi]$. To handle the tricky part $\E_{P \otimes Q}[\xi_R]$, we relax $\xi_R$ and use a Taylor series to approximate $\E_{P \otimes Q}[\xi_R]$; the result form can again be evaluated by Pauli measurements composed with simple SWAP operations. As the major computation (e.g., circuit evaluation and Pauli measurements) is efficient on quantum machines, the overall implementation is efficient with possible overhead of sampling trials.

\noindent \textbf{Computing the gradients:} The parameters of the qWGAN are $\{p_i\} \cup \{\theta_{i,j}\} \cup \{\alpha_k\} \cup \{\beta_l\}$. $L$ is a linear function of $p_i,\alpha_k,\beta_l$. Thus it can be shown that the partial derivatives w.r.t. $p_i$ can be computed by evaluating the loss function on a generated state $U_i\rho_0 U_i^\dagger$ and the partial derivatives w.r.t. $\alpha_k,\beta_l$ can be computed by evaluating the loss function with $\phi,\psi$ replaced with $A_k,B_l$ respectively. The partial derivatives w.r.t. $\theta_{i,j}$ can be evaluated using techniques due to~\cite{schuld2019evaluating} via a simple yet elegant modification of the quantum circuits used to evaluate the loss function. The complexity analysis is similar to above. The only new ingredient is the quantum circuits to evaluate the partial derivatives w.r.t. $\theta_{i,j}$ due to~\cite{schuld2019evaluating}, which are again efficient on quantum machines.

\noindent \textbf{Summary of the training complexity:} A rough complexity analysis above suggests that one step of the evaluation of the loss function (or the gradients) of our qWGAN can be efficiently implemented on quantum machines. (A careful analysis is in \append{comp-cost-eval}.) Given this ability, the rest of the training of qWGAN is similar to the classical case and will share the same complexity. It is worthwhile mentioning that quantum circuit evaluation and Pauli measurements are not known to be efficiently computable by classical machines; the best known approach will cost exponential time.


\section{Experimental Results}
\label{sec:experiments}

We supplement our theoretical findings with numerical results by classical simulation of quantum WGANs of learning \emph{pure} states (up to 8 qubits) and \emph{mixed} states (up to 3 qubits) as well as its performance on noisy quantum machines. We use quantum fidelity between the generated and target states to track the progress of our quantum WGAN.
If the training is successful, the fidelity will approach $1$. Our quantum WGAN is trained using the alternating gradient descent method.

In most of the cases, the target state is generated by a circuit sharing the same structure with the generator but with randomly chosen parameters. We also demonstrate a special target state corresponding to useful quantum unitaries via the Choi-Jamio\l{}kowski isomorphism.
More details of the following experiments (e.g., parameter choices) can be found in \append{numerical-results}.

Most of the simulations were run on a dual core Intel I5 processor with 8G memory. The $8$-qubit pure state case was run on a Dual Intel Xeon E5-2697 v2 @ 2.70GHz processor with 128G memory. All source codes are publicly available at \url{https://github.com/yiminghwang/qWGAN}.

\noindent \textbf{Pure states\ \ } We demonstrate a typical performance of quantum WGAN of learning $1, 2, 4$, and $8$ qubit pure states in \fig{pure-state-clean}. We also plot the average fidelity for 10 runs with random initializations in \fig{average-pure} which shows the numerical stability of qWGAN.

\begin{figure}[!tb]
  \minipage{0.48\textwidth}%
  \centering
  \includegraphics[width=0.7\linewidth, height=2.8cm]{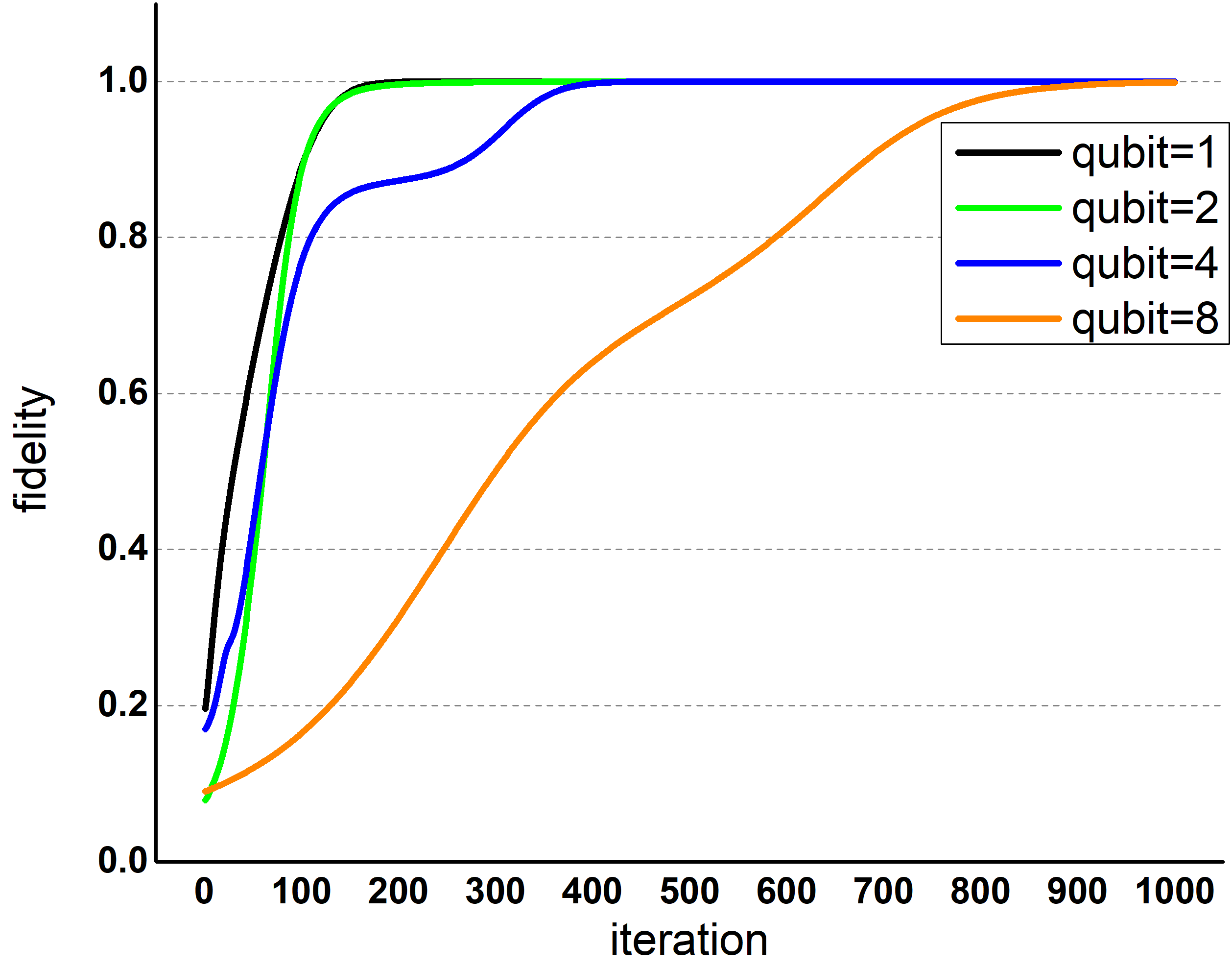}
  \caption*{Fidelity vs Training Epochs}
  \endminipage\hfill
  \minipage{0.48\textwidth}%
  \centering
  \includegraphics[width=0.7\linewidth]{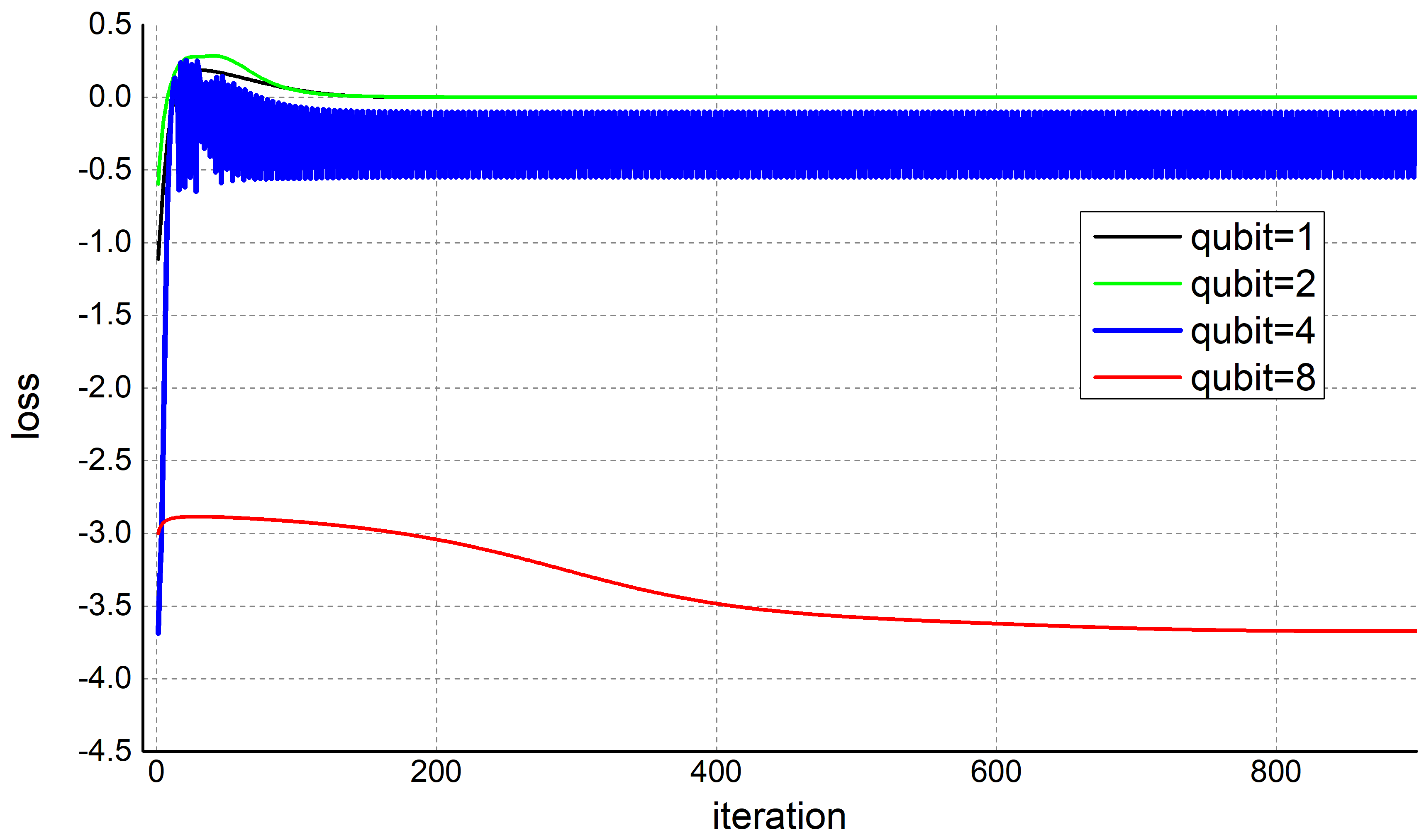}
  \caption*{Training Loss}
  \endminipage\hfill
  \caption{A typical performance of learning pure states (1,2,4, and 8 qubits).}
  \label{fig:pure-state-clean}
\end{figure}

\begin{figure}[!htb]
  \begin{minipage}[b]{0.25\textwidth}
  \includegraphics[width=\linewidth, height=2.5cm]{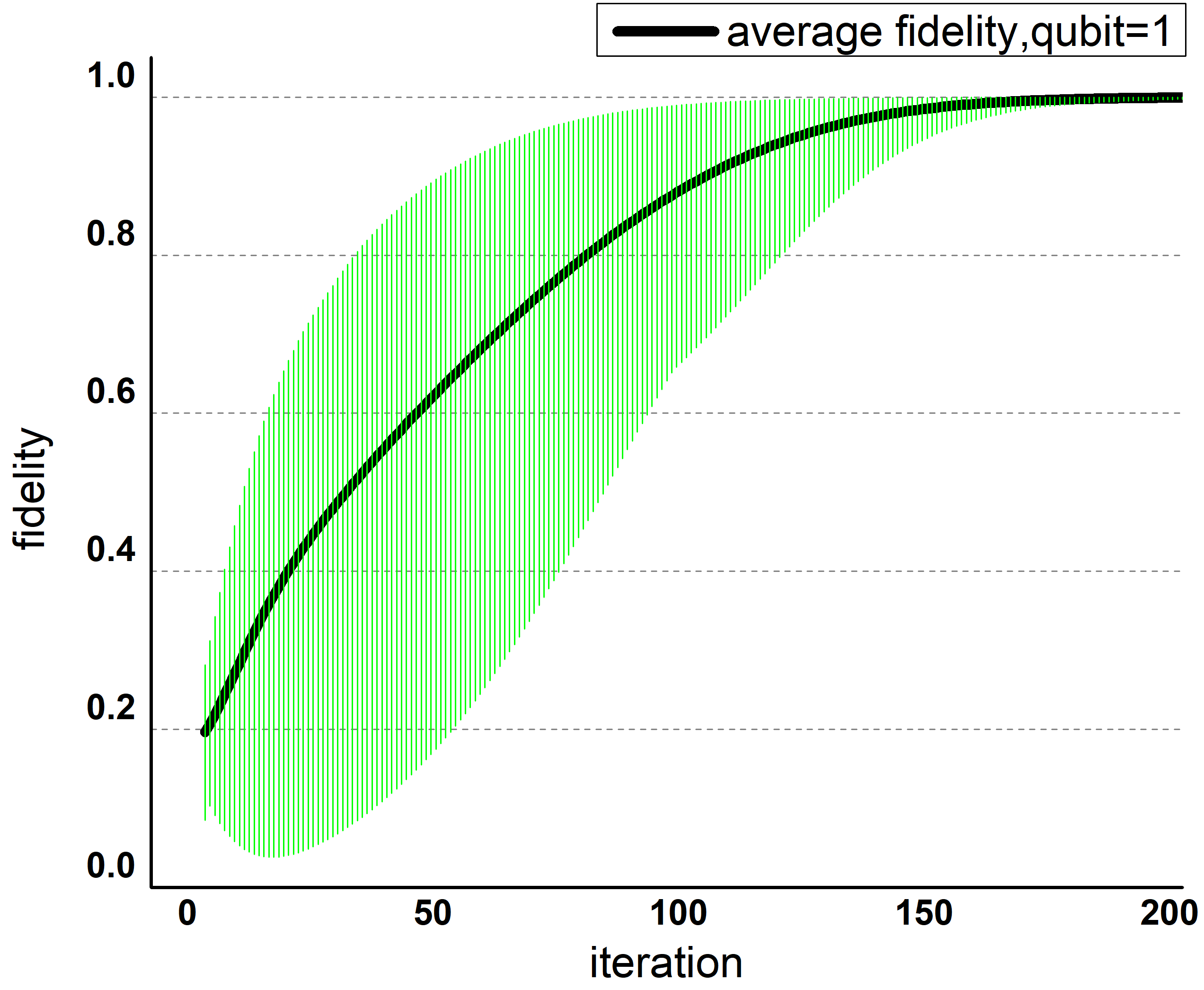}
  \caption*{1 qubit}
  \end{minipage}\hfill
\begin{minipage}[b]{0.25\textwidth}
  \includegraphics[width=\linewidth, height=2.5cm]{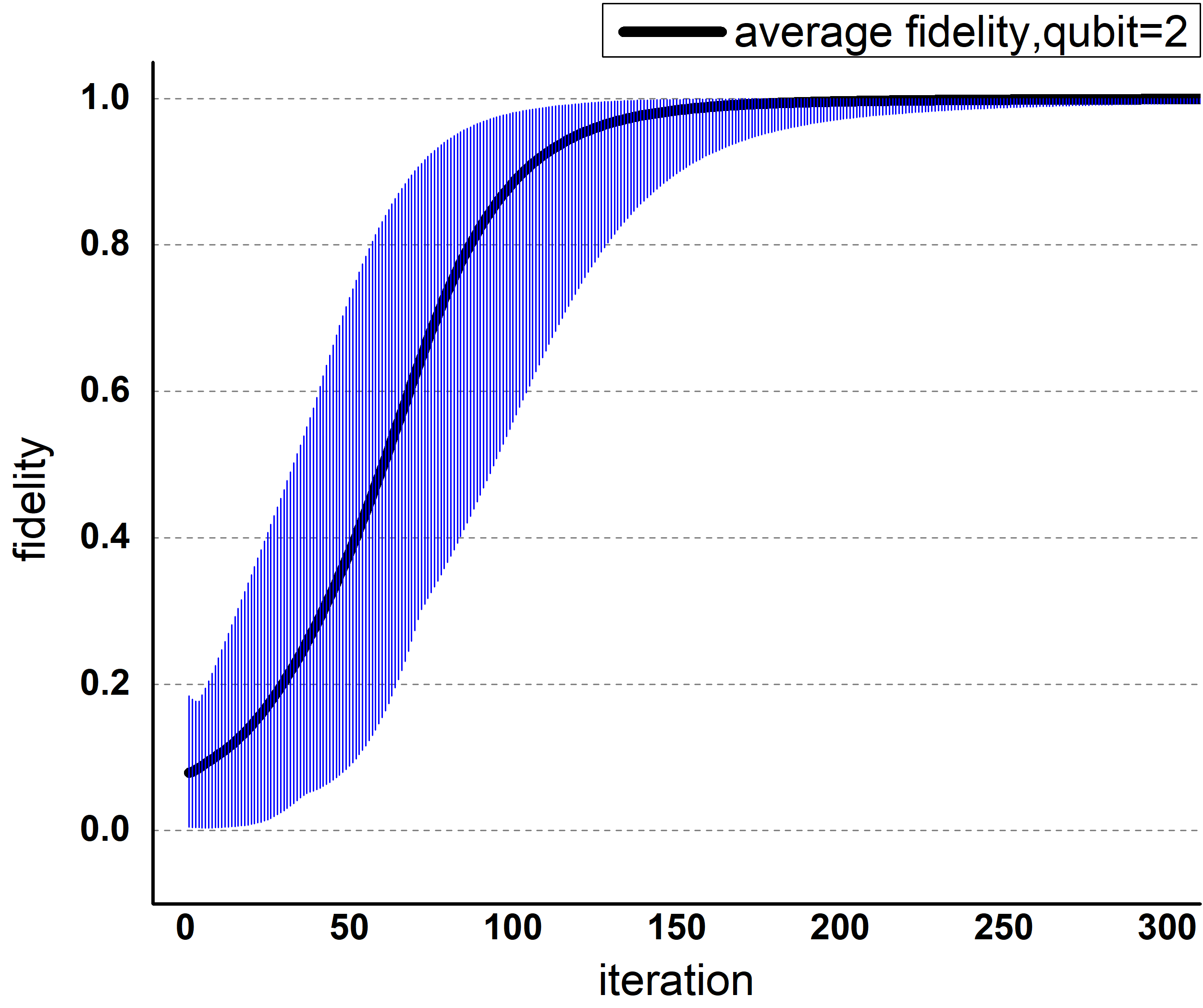}
  \caption*{2 qubits}
\end{minipage}\hfill
\begin{minipage}[b]{0.25\textwidth}%
  \includegraphics[width=\linewidth, height=2.5cm]{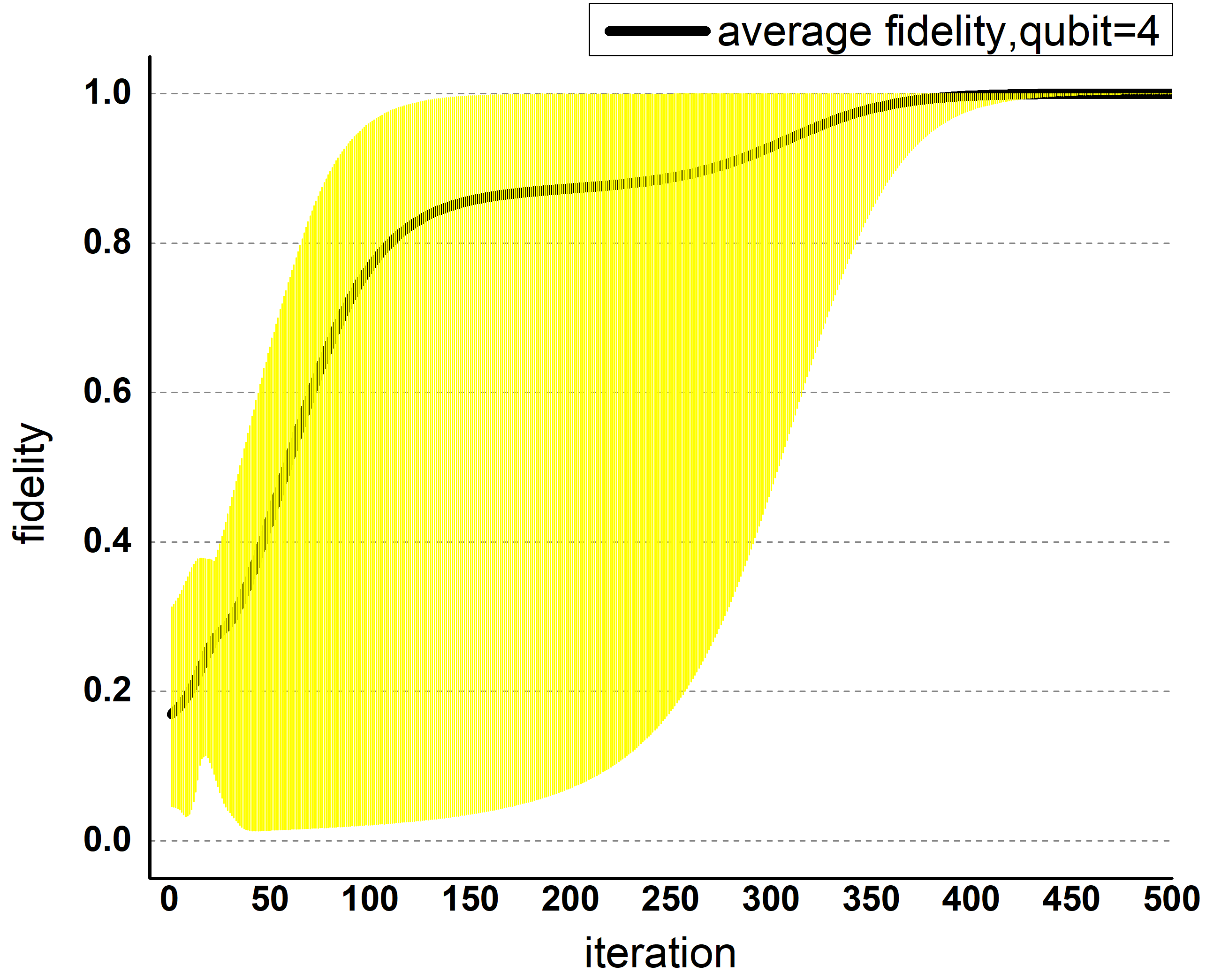}
  \caption*{4 qubits}
\end{minipage}\hfill
\begin{minipage}[b]{0.25\textwidth}%
  \includegraphics[width=\linewidth, height=2.5cm]{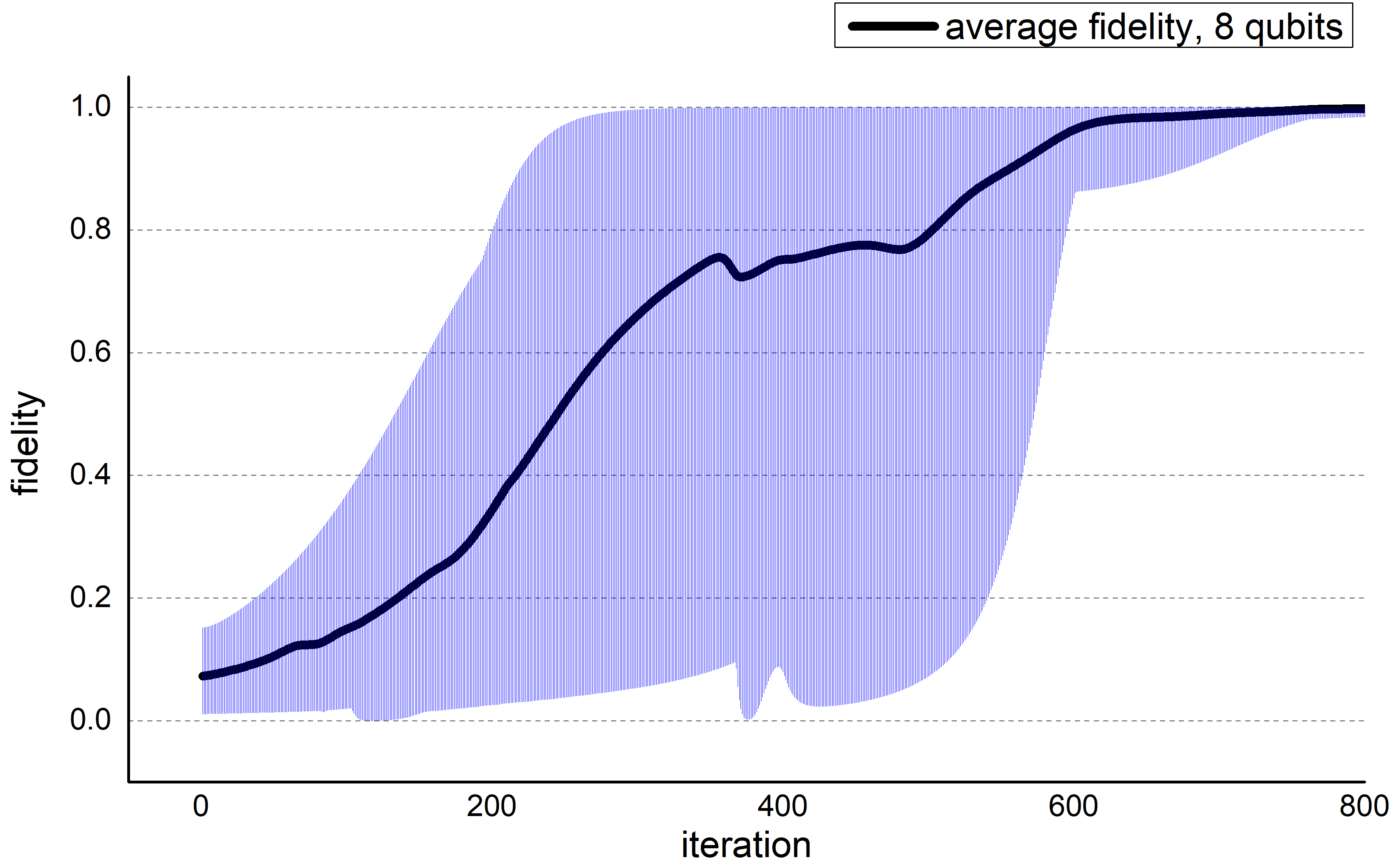}
  \caption*{8 qubits}
\end{minipage}
  \caption{Average performance of learning pure states (1, 2, 4, 8 qubits) where the black line is the average fidelity over multi-runs with random initializations and the shaded area refers to the range of the fidelity.}
  \label{fig:average-pure}
\end{figure}

\begin{figure}[!htb]
  \begin{minipage}[b]{0.32\textwidth}
  \includegraphics[width=\linewidth, height=2.5cm]{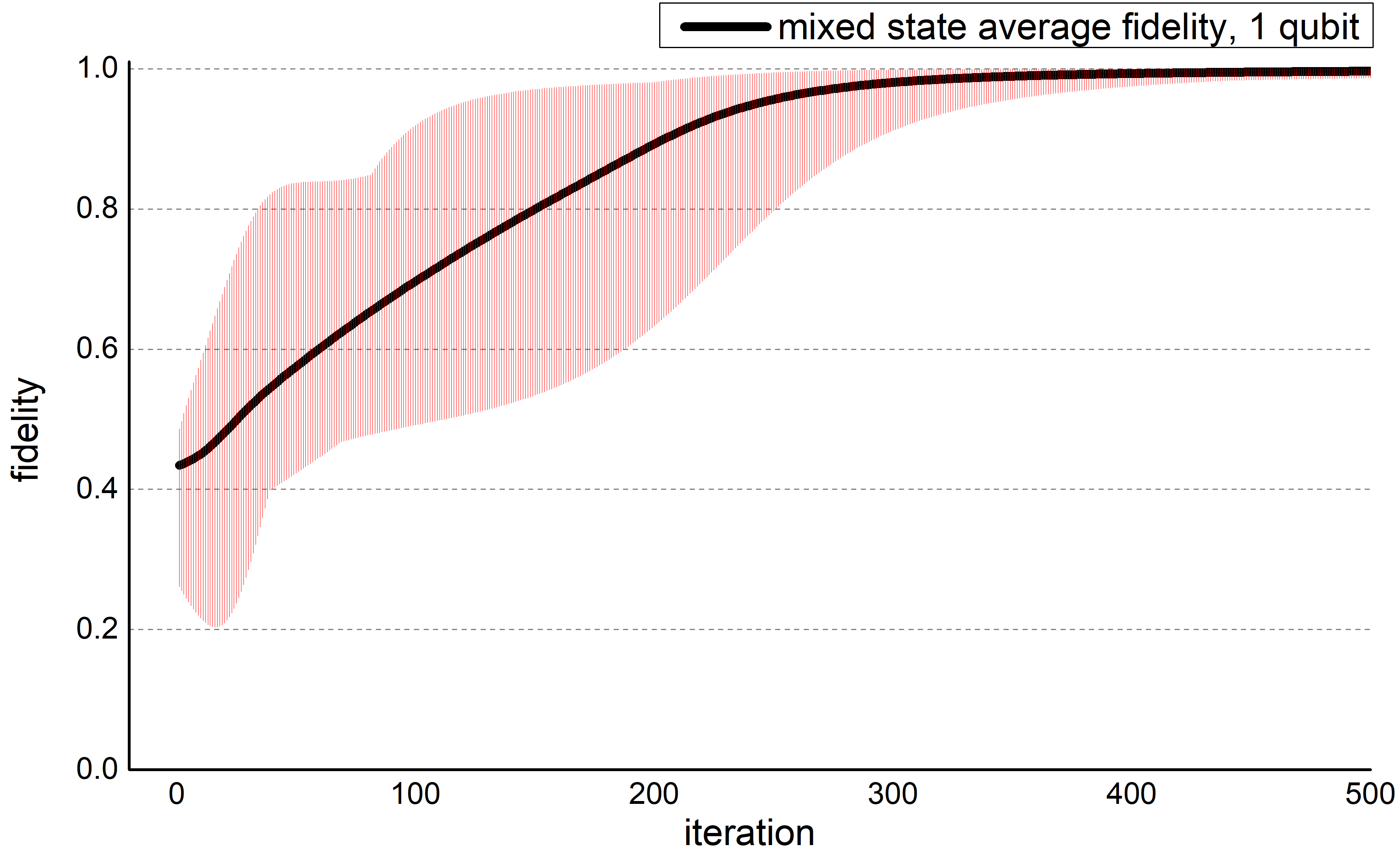}
  \caption*{1 qubit}
  \end{minipage}\hfill
\begin{minipage}[b]{0.32\textwidth}
  \includegraphics[width=\linewidth, height=2.5cm]{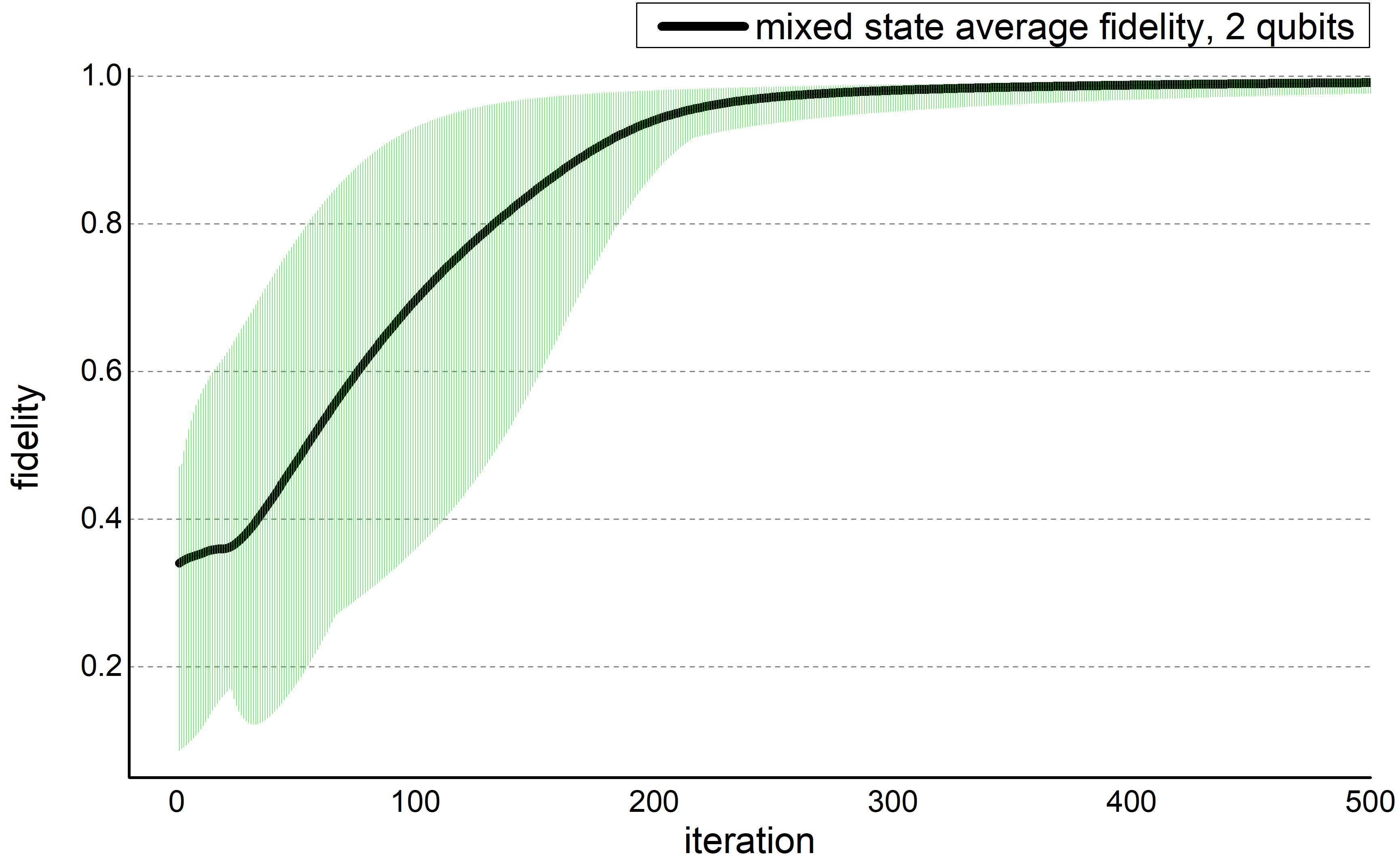}
  \caption*{2 qubits}
\end{minipage}\hfill
\begin{minipage}[b]{0.32\textwidth}%
  \includegraphics[width=\linewidth, height=2.5cm]{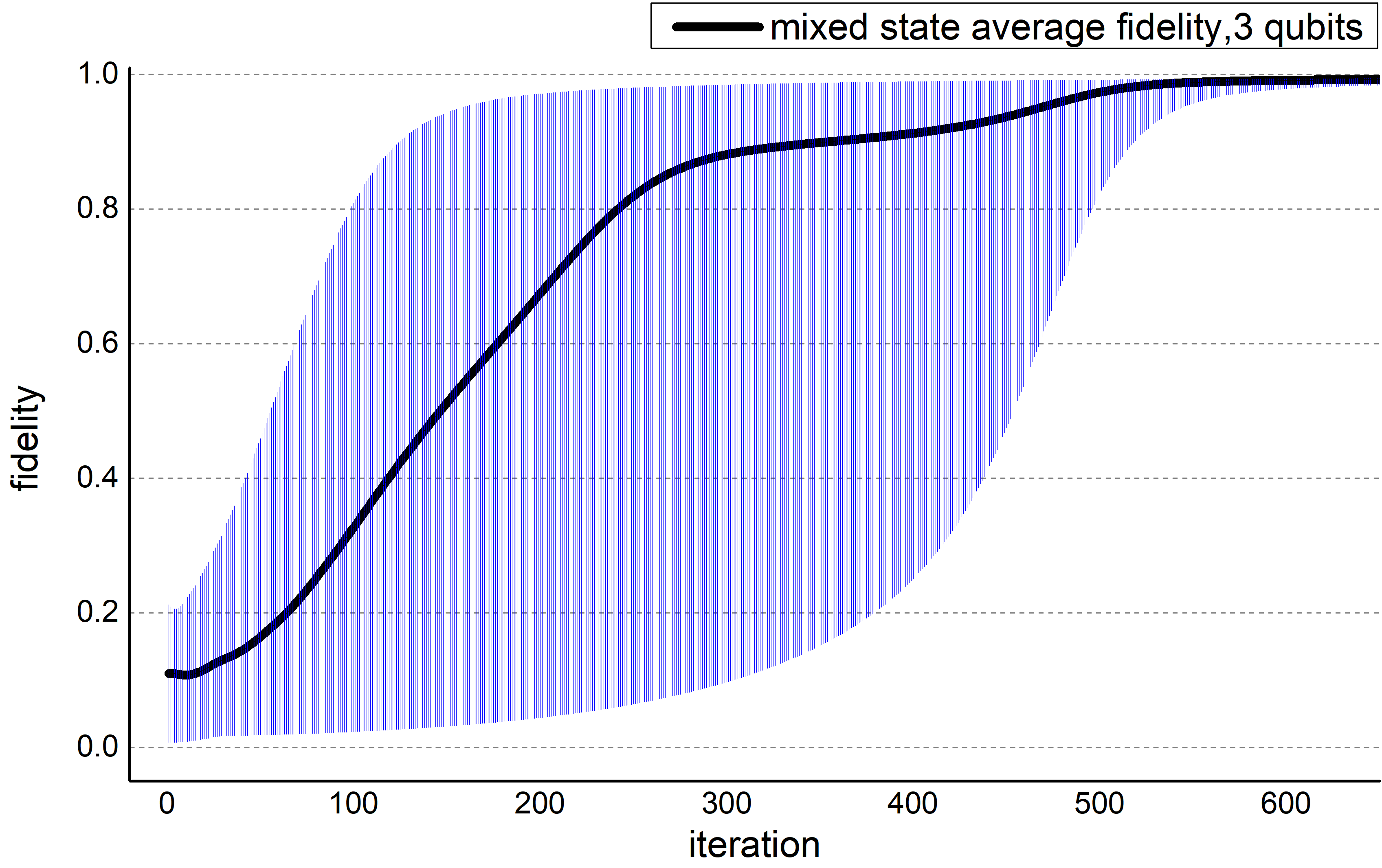}
  \caption*{3 qubits}
\end{minipage}\hfill
  \caption{Average performance of learning mixed states (1, 2, 3 qubits) where the black line is the average fidelity over multi-runs with random initializations and the shaded area refers to the range of the fidelity.}
  \label{fig:average-mixed}
\end{figure}

\begin{figure}[!tb]
  \begin{minipage}[b]{0.40\textwidth}%
  \centering
  \includegraphics[width=0.7\linewidth]{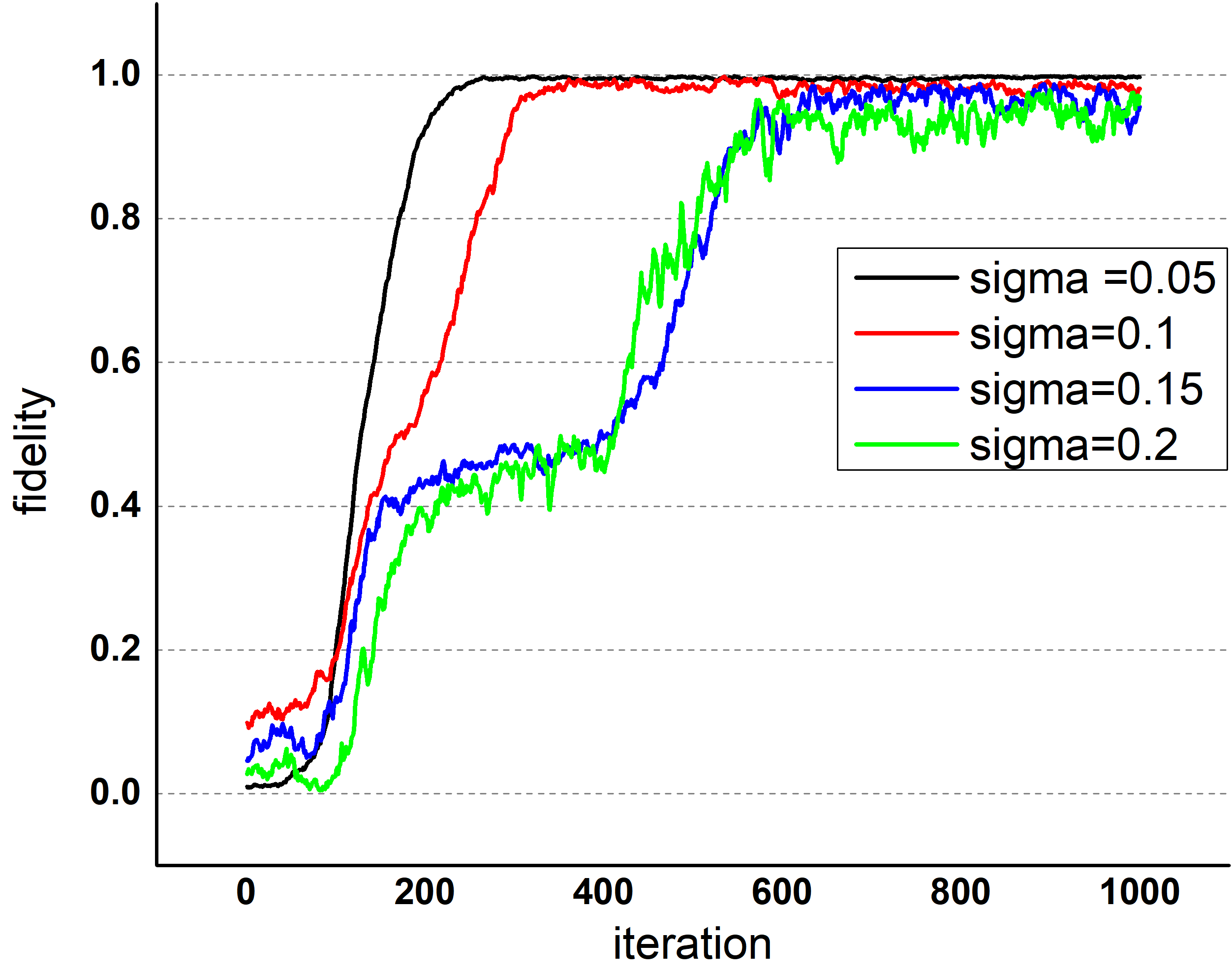}
  \caption{Learning 4-qubit pure states with noisy quantum operations.} \label{fig:noisy-learning}
  \end{minipage}\hfill
  \begin{minipage}[b]{0.55\textwidth}%
  \centering
  \includegraphics[width=0.7\linewidth, height=3cm]{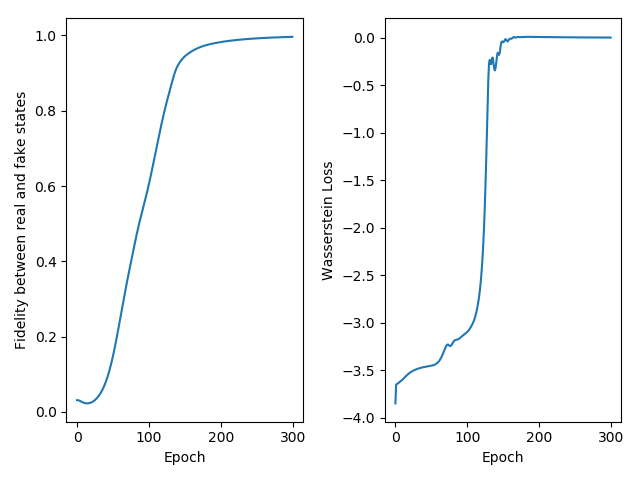}
  \caption{Learning to approximate the 3-qubit Hamiltonian simulation circuit of the 1-d Heisenberg model.}\label{fig:pi-learning}
  \end{minipage}\hfill
\end{figure}

\noindent \textbf{Mixed states\ \ } We also demonstrate a typical learning of mixed quantum states of rank $2$ with $1,2$, and $3$ qubits in \fig{average-mixed}. The generator now consists of $2$ unitary operators and 2 real probability parameters $p_1,p_2$ which are normalized to form a probability distribution using a softmax layer.

\noindent \textbf{Learning pure states with noise\ \ } To investigate the possibility of implementing our quantum WGAN on near-term machines, we perform a numerical test on a practically implementable 4-qubit generator on the ion-trap machine~\cite{ionq} with an approximate noise model~\cite{Zhu}. We deem this as the closest example that we can simulate to an actual physical experiment.
In particular, we add a Gaussian sampling noise with standard deviation $\sigma=0.2, 0.15, 0.1,0.05$ to the measurement outcome of the quantum system.
Our results (in \fig{noisy-learning}) show that the quantum WGAN can still learn a 4-qubit pure state in the presence of this kind of noise. As expected, noise of higher degrees (higher $\sigma$) increases the number of epochs before the state is learned successfully.

\noindent \textbf{Comparison with existing experimental results\ \ \ } We will compare to quantum GANs with quantum data~\cite{Killoran18qgan, benedetti2018adversarial,hu2018quantum}. It is unfortunate that there is neither precise figure nor public data in their papers which makes a precise comparison infeasible. However, we manage to give a rough comparison as follows. Ref.~\cite{Killoran18qgan} studies the pure state and the labeled mixed state case for 1 qubit. It can be inferred from the plots of their results (Figure 8.b in~\cite{Killoran18qgan}) that the relative entropy for both labels converges to $10^{-10}$ after $\sim 5000$ iterations, and it takes more than $1000$ iterations for the relative entropy to significantly decrease from $1$.
Ref.~\cite{hu2018quantum} performs experiments to learn 1-qubit pure and mixed states using a quantum GAN on a superconducting quantum circuit. However, the specific design of their GAN is very unique to the 1-qubit case. They observe that the fidelity between the fake state and the real state approaches $1$ after $220$ iterations for the pure state, and $120$ iterations for the mixed state.
From our figures, qWGAN can quickly converge for 1-qubit pure states after $150-160$ iterations and for a $1$-qubit mixed state after $\sim 120$ iterations.

Ref.~\cite{benedetti2018adversarial} studies only pure states but with numerical results up to 6 qubits. In particular, they demonstrate (in Figure 6 from~\cite{benedetti2018adversarial}) in the case of 6-qubit that the normal gradient descent approach, like the one we use here, won't make much progress at all after 600 iterations. Hence they introduce a new training method. This is in sharp contrast to our \fig{pure-state-clean} where we demonstrate smooth convergence to fidelity 1 with the simple gradient descent for 8-qubit pure states within 900 iterations.

\noindent \textbf{Application: approximating quantum circuits\ \ } To approximate any quantum circuit $U_0$ over $n$-qubit space $\X$, consider Choi-Jamio\l{}kowski state $\Psi_0$ over $\X \otimes \X$ defined as $(U_0 \otimes \I_\X) \Phi$ where $\Phi$ is the maximally entangled state $\frac{1}{\sqrt{2^n}} \sum_{i=0}^{2^n-1} \vec{e_i} \otimes \vec{e_i}$ and $\{\vec{e}_i\}_{i=0}^{2^n-1}$ forms an orthonormal basis of $\X$.
The generator is the normal generator circuit $U_1$ on the first $\X$ and identity on the second $\X$, i.e., $U_1\otimes \I$.  In order to learn for the 1-d 3-qubit Heisenberg model circuit (treated as $U_0$) in~\cite{childs2018towards}, we simply run our qWGAN to learn the 6-qubit Choi-Jamio\l{}kowski state $\Psi_0$ in \fig{pi-learning} and obtain the generator (i.e., $U_1$).
We use the gate set of single or 2-qubit Pauli rotation gates. Then $U_1$ only has 52 gates, while using the best product-formula (2nd order) $U_0$ has $\sim$11900 gates.  It is worth noting that $U_1$ achieves an average output fidelity over 0.9999 and a worst-case error 0.15, whereas $U_0$ has a worst-case error 0.001.  However,  the worst-case input of $U_1$ is not realistic in current experiments and hence the high average fidelity implies very reasonable approximation in practice.


\section{Conclusion \& Open Questions}
\label{sec:conclusion}
We provide the first design of quantum Wasserstein GANs, its performance analysis on realistic quantum hardware through classical simulation, and a real-world application in this paper. At the technical level, we propose a counterpart of Wasserstein metric between quantum data. We believe that our result opens the possibility of quite a few future directions, for example:
\begin{itemize}[leftmargin=*]
\item Can we implement our quantum WGAN on an actual quantum computer? Our noisy simulation suggests the possibility at least on an ion-trap machine.
\item Can we apply our quantum WGAN to even larger and noisy quantum systems? In particular, can we approximate more useful quantum circuits using small ones by quantum WGAN? It seems very likely but requires more careful numerical analysis.
\item Can we better understand and build a rich theory of quantum Wasserstein metrics in light of~\cite{villani2008optimal}?
\end{itemize}


\section*{Acknowledgement}
We thank anonymous reviewers for many constructive comments and Yuan Su for helpful discussions about the reference~\cite{childs2018towards}.  
SC, TL, and XW received support from the U.S. Department of Energy, Office of Science, Office of Advanced Scientific Computing
Research, Quantum Algorithms Teams program. SF received support from Capital One and NSF CDS\&E-1854532.
TL also received support from an IBM Ph.D. Fellowship and an NSF QISE-NET Triplet Award (DMR-1747426).
XW also received support from NSF CCF-1755800 and CCF-1816695.


\providecommand{\bysame}{\leavevmode\hbox to3em{\hrulefill}\thinspace}


\newpage
\begin{center}
  {\Large Supplementary Materials}
\end{center}
\appendix


\section{Preliminaries}\label{append:prelim}
\subsection{Quantum Information}
We introduce necessary quantum information backgrounds for our qWGAN.

\paragraph{Quantum states}
Quantum information can be formulated in terms of linear algebra. Given the space $\C^{d}$, its computational basis is denoted as $\{\vec{e}_{0},\ldots,\vec{e}_{d-1}\}$, where $\vec{e}_{i}=(0,\ldots,1,\ldots,0)^{\dagger}$ with the $(i+1)^{\text{th}}$ entry being 1 and other entries being 0; here `$\dagger$' denotes the complex conjugate of a vector/matrix.

\emph{Pure quantum states} with dimension $d$ are represented by unit vectors in $\C^{d}$: i.e., a vector $\vec{v}=(v_{0},\ldots,v_{d-1})^{\dagger}$ is a quantum state if $\sum_{i=0}^{d-1}|v_{i}|^{2}=1$. For each $i$, $v_{i}$ is called the \emph{amplitude} in $\vec{e}_{i}$. If there are at least two non-zero amplitudes, quantum state $\vec{v}$ is in \emph{superposition} of the computational basis, a fundamental feature in quantum mechanics.

\emph{Mixed quantum states} are probabilistic mixtures of pure quantum states. Formally, a mixed state can be written as $\sum_{k=1}^{r}p_{k}\vec{v}_{k}\vec{v}_{k}^{\dagger}$ where $p_{k}\geq 0\ \forall k\in\range{r}$, $\sum_{k=1}^{r}p_{k}=1$, and $\vec{v}_{k}$ is a pure state (i.e. $\|\vec{v}_{k}\|_{2}=1$) for all $k\in\range{r}$. Denote $\rho:=\sum_{k=1}^{r}p_{k}\vec{v}_{k}\vec{v}_{k}^{\dagger}$; $\rho$ satisfies $\rho\succeq 0$, $\Tr[\rho]=1$, and $\rho^{\dagger}=\rho$ (i.e., $\rho$ is a \emph{Hermitian matrix}). Such matrices are called \emph{density matrices}, and every mixed state is a density matrix (and vice versa).

In many scenarios, quantum states are naturally composed of two parts. This comes to the concept of \emph{bipartite quantum systems}, where a bipartite quantum state $\rho_{12}$ in $\C^{d_{1}}\otimes\C^{d_{2}}$ ($d_{1},d_{2}\in\N$) can be written as $\rho_{12}=\sum_{i}c_{i}\rho_{i,1}\otimes\rho_{i,2}$ for a probability distribution $\{c_{i}\}$ and density matrices $\{\rho_{i,1}\}$ in $\C^{d_{1}}$ and $\{\rho_{i,2}\}$ in $\C^{d_{2}}$. Since $\sum_{i}c_{i}=1$ we have $\Tr[\rho_{12}]=1$, i.e., $\rho_{12}$ is a density matrix in $\C^{d_{1}}\otimes\C^{d_{2}}$; \emph{partial trace} is defined to further characterize the properties in each separate part. Formally, the partial trace on system 1 is defined as $\Tr_{1}[\rho_{12}]:=\sum_{i}c_{i}\rho_{i,2}$, whereas the partial trace on system 2 is defined as $\Tr_{2}[\rho_{12}]:=\sum_{i}c_{i}\rho_{i,1}$.

\paragraph{Qubits}
The basic element in classical computers is one bit, whereas the basic element in quantum computers is one \emph{qubit}. Mathematically, a 1-qubit state is a state in $\C^{2}$ and can be written as $a\vec{e}_{0}+b\vec{e}_{1}$ for some $a,b\in\C$ such that $|a|^{2}+|b|^{2}=1$. An $n$-qubit state can be written as $\vec{v}_{1}\otimes\cdots\otimes\vec{v}_{n}$ where each $\vec{v}_{i}$ ($i\in\range{n}$) is a qubit state, and $\otimes$ is the Kronecker product: if $\vec{u}\in\C^{d_{1}}$ and $\vec{v}\in\C^{d_{2}}$, then $\vec{u}\otimes\vec{v}\in\C^{d_{1}}\otimes\C^{d_{2}}$ is
\begin{align}
\vec{u}\otimes\vec{v}=(u_{0}v_{0},u_{0}v_{1},\ldots,u_{d_{1}-1}v_{d_{2}-1})^{\dagger}.
\end{align}
$n$-qubit states are in a Hilbert space of dimension $2^{n}$.

\paragraph{Unitary gates}
Having the definition of quantum states, it comes to the rules of their evolution. Note that we want to keep the quantum states normalized under $\ell_{2}$-norm; in linear algebra such transformations are known as \emph{unitary transformation}. Formally, a matrix $U$ is unitary iff $UU^{\dagger}=I$.

The gates in quantum computation are always unitary gates and can be stated in the circuit model\footnote{Uniform circuits have equivalent computational power as Turing machines; however, they are more convenient to use in quantum computation.} where an \emph{$n$-qubit gate} is a unitary matrix in $\C^{2^{n}}$. A common group of unitary gates on a qubit is the \emph{Pauli gates}, where
\begin{align}\label{eq:paulis}
  \sigma_I =
  \begin{bmatrix}
      1 & 0 \\
      0 & 1
    \end{bmatrix},\
  \sigma_x =
    \begin{bmatrix}
      0 & 1 \\
      1 & 0
    \end{bmatrix},\
          \sigma_y =
          \begin{bmatrix}
            0 & -i \\
            i & 0 \\
          \end{bmatrix},\
    \sigma_z =
    \begin{bmatrix}
      1 & 0 \\
      0 & -1 \\
    \end{bmatrix};
\end{align}
note that the Pauli gates form a basis of all the unitaries acting on $\C^{2}$. Furthermore, $\sigma_I^{2}=\sigma_x^{2}=\sigma_y^{2}=\sigma_z^{2}=I$; this implies that the exponentiation of a Pauli matrix is a linear combination of Pauli matrices: for any phase $\theta\in\R$ and $\sigma\in\{\sigma_I,\sigma_x,\sigma_y,\sigma_z\}$, the Taylor expansion of $e^{\theta\sigma}$ is
\begin{align}\label{eq:Pauli-identity}
e^{\theta\sigma}=\sum_{k=0}^{\infty}\frac{\theta^{k}\sigma^{k}}{k!}=\sum_{k=0}^{\infty}\frac{\theta^{2k}}{(2k)!}I+\sum_{k=0}^{\infty}\frac{\theta^{2k+1}}{(2k+1)!}\sigma.
\end{align}

\paragraph{Quantum measurements}
Quantum states can be measured by quantum measurements. For pure states, the simplest measurement is to measure in the computational basis; for $\vec{v}=(v_{1},\ldots,v_{n})$, such measurement returns $k$ with probability $|v_{k}|^{2}$ for all $k\in\range{n}$. Recall that $\vec{v}$ is normalized such that $\|\vec{v}\|_{2}=1$, the measurement outcome constitutes a probability distribution on $\range{n}$. For $n$-qubit pure states $\vec{v}$, a common measurement is the \emph{Pauli measurement}, where you first apply $\vec{v}$ by a tensor of Pauli gates $\sigma_{1}\otimes\cdots\otimes\sigma_{n}$ ($\sigma_{1},\ldots,\sigma_{n}\in\{\sigma_I,\sigma_x,\sigma_y,\sigma_z\}$) and measure in the computational basis $\{\vec{e}_{0}\otimes\cdots\otimes\vec{e}_{0},\ldots,\vec{e}_{1}\otimes\cdots\otimes\vec{e}_{1}\}$.

For a density matrix $\rho$, the most general measurements are positive-operator valued measurements (POVMs), characterized by a set of Hermitian operators $\{E_{1},\ldots,E_{k}\}$ such that 1) $E_{i}\succeq 0$ for all $i\in\range{k}$, and 2) $\sum_{i=1}^{k}E_{i}=I$. The outcome of the measurement is $i$ with probability $\Tr[\rho E_{i}]$; this also constitutes a probability distribution as $\sum_{i=1}^{k}\Tr[\rho E_{i}]=\Tr[\rho]=1$.

\paragraph{Distance measure}
There are various of ways to define the distance between two quantum states $\rho_{1}$ and $\rho_{2}$. One natural distance is the \emph{trace distance} defined by $F_{\Tr}(\rho_{1},\rho_{2}):=\Tr|\rho_{1}-\rho_{2}|$, the sum of the absolute value of the eigenvalues of $\rho_{1}-\rho_{2}$; this generalizes the \emph{total variation distance} between classical distributions. Another common distance is the \emph{fidelity}: $F(\rho_{1},\rho_{2}):=\Tr[\sqrt{\sqrt{\rho_{1}}\rho_{2}\sqrt{\rho_{1}}}]^{2}$. $F(\rho,\sigma) = 1$ if and only if $\rho = \sigma$, and $F(\rho,\sigma)$ approaches $1$ as $\rho$ approaches $\sigma$~\cite{nielsen2002quantum}.

Besides symmetric distances, people also consider divergences as they also characterize natural properties between two distributions. One such example is the \emph{Kullback-Leibler divergence} (KL divergence)~\cite{kullback1951information}, also known as the \emph{relative entropy}, defined as follows for two classical distributions $p$ and $q$ on $\range{n}$:
\begin{align}\label{eq:classical-KL}
D_{\textrm{KL}}(p\|q)=\sum_{i=1}^{n}p_{i}\log(p_{i}/q_{i})=\sum_{i=1}^{n}p_{i}\log p_{i}-\sum_{i=1}^{n}p_{i}\log q_{i}.
\end{align}
Quantumly there is a natural extension, namely the \emph{quantum relative entropy}, defined as follows:
\begin{align}\label{eq:quantum-KL}
S(\rho\|\sigma):=\Tr[\rho(\log\rho-\log\sigma)].
\end{align}
(See \eq{matrix-log} below for the definition of $\log\rho$ and $\log\sigma$.)

To learn quantum distributions (states), one must minimize some measure of \emph{distance} between the true density matrix and our learned state; however, it turns out that the trace distance and the fidelity are not easily amenable to be optimized. This is the main reason why we adopt our quantum Wasserstein semimetric; see more discussions in \sec{qW} and \append{qW-proof}.

\paragraph{Symmetric subspace}
Recall that our quantum Wasserstein semimetric in \sec{qW} is \emph{symmetric}; achieving this requires the theory of \emph{symmetric subspaces}. Given two Hilbert spaces $\X$ and $\Y$ that are isometric, a symmetric subspace of the space $\X \otimes \Y$ is the space of those vectors that are invariant to a permutation of $\X$ and $\Y$ individually. Ref.~\cite{harrow13church} proved that the projection onto the symmetric subspace is given by
\begin{align}
\Pi_{\text{sym}}:=\frac{\I + \mathrm{SWAP}}{2}
\end{align}
where $\I$ is the identity operator and $\SWAP$ is the operator such that $\SWAP(x \otimes y) = (y \otimes x), \forall x \in \X, y \in \Y$. It is also well known that $\Pi_{\text{sym}}$ is a projector on $\X \otimes \Y$, ie. $\Pi_{\text{sym}}^2 = \Pi_{\text{sym}}$, and that $\Pi_{\text{sym}}(u \otimes u) = u \otimes u$ for all quantum states $u$.
This motivates us to choose the cost matrix $C$ in \eq{quant-wgan-primal} to be the complement of the symmetric subspace, i.e.,
\begin{align}
C:=\frac{\I - \SWAP}{2}.
\end{align}
Such choice is natural because on the one hand it ensures that $\qW(\rho,\rho) = 0$ for any quantum state $\rho$, and on the other hand it promises the symmetry of the semimetric, i.e., $\qW(\rho,\sigma) = \qW(\sigma,\rho)$ for any quantum states $\rho,\sigma$.

\subsection{Matrix Arithmetics}
Unless otherwise mentioned, the matrices we consider are \emph{Hermitian}, defined as all matrices $A$ such that $A^{\dagger}=A$. For any two Hermitian matrices $A,B\in\C^{n\times n}$, we say $A\succeq B$ iff $A-B$ is a positive semidefinite matrix (i.e., $A-B$ only has nonnegative eigenvalues), and $A\succ B$ iff $A-B$ is a positive definite matrix (i.e., $A-B$ only has positive eigenvalues).

A function of a Hermitian matrix is computed by taking summations of matrix powers under its Taylor expansion; for instance, for any Hermitian $A$ we have
\begin{align}\label{eq:matrix-exp}
\exp(A):=\sum_{k=0}^{\infty}\frac{A^{k}}{k!},
\end{align}
and for any $0\prec B\prec 2I$ we have
\begin{align}\label{eq:matrix-log}
\log(B):=\sum_{k=1}^{\infty}\frac{(-1)^{k+1}}{k}(B-I)^{k}.
\end{align}
Furthermore, we introduce two tools for matrix arithmetics that we frequently use throughout the paper. The first is a rule for taking gradients of matrix functions:
\begin{lemma}[\cite{tsuda2005matrix}]\label{lem:matrix-gradient}
Given a Hermitian matrix $W\in\C^{n\times n}$ and a function $f\colon\R\to\R$, we define the gradient $\nabla_W f(W)$ as the entry-wise derivatives, i.e., $\nabla_W f(W):=(\frac{\partial f(W)_{ij}}{\partial W_{ij}})_{i,j=1}^{n}$. Then we have
\begin{align}
\nabla_W \Tr(W\log(W)) = [\log(W) + (W)]^\dagger = \log(W) + W.
\end{align}
\end{lemma}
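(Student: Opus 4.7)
The plan is to establish the matrix derivative identity via a Fr\'echet-derivative computation and then read off the entrywise partial derivatives from it; this is the standard route used by Tsuda--R\"atsch--Warmuth. One should note that the classical result being invoked is $\nabla_W \Tr(W\log W)=\log W + I$ (so the right-hand summand, written "$(W)$" in the displayed equation, is evidently the identity matrix $I$ rather than $W$); the argument below establishes this form, and the Hermitian-ness of $\log W + I$ makes the $\dagger$ that appears in the display automatic.

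First I would fix an arbitrary Hermitian perturbation $\delta W$ and expand $\Tr((W+\varepsilon\,\delta W)\log(W+\varepsilon\,\delta W))$ to first order in $\varepsilon$. By the Leibniz rule for the trace this splits as
\begin{equation*}
\varepsilon\,\Tr(\delta W\cdot \log W)\;+\;\varepsilon\,\Tr(W\cdot \delta\!\log W)\;+\;O(\varepsilon^{2}).
\end{equation*}
The first summand already produces the $\log W$ part of the answer, so the burden reduces to proving the clean identity $\Tr(W\cdot\delta\!\log W)=\Tr(\delta W)$.

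To evaluate $\delta\!\log W$ I would use the integral representation $\log W=\int_{0}^{\infty}\bigl(\tfrac{1}{1+s}\,I-(W+sI)^{-1}\bigr)\,\d s$, valid for $W\succ 0$. Differentiating under the integral and using $\delta(W+sI)^{-1}=-(W+sI)^{-1}\,\delta W\,(W+sI)^{-1}$ gives
\begin{equation*}
\delta\!\log W\;=\;\int_{0}^{\infty}(W+sI)^{-1}\,\delta W\,(W+sI)^{-1}\,\d s .
\end{equation*}
Left-multiplying by $W$, taking $\Tr$, cycling inside the trace, and invoking the scalar identity $\int_{0}^{\infty}\lambda(\lambda+s)^{-2}\,\d s=1$ (spectrally, $\int_{0}^{\infty}W(W+sI)^{-2}\,\d s=I$) collapses the integral to $\Tr(\delta W)$, as required. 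Substituting back yields the Fr\'echet form $\delta\Tr(W\log W)=\Tr\bigl((\log W+I)\,\delta W\bigr)$, and the standard identification of $\Tr(M^{\top}\delta W)$ with the entrywise partials $(\partial/\partial W_{ij})$ then delivers $\nabla_W\Tr(W\log W)=\log W + I$.

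The main obstacle is handling the non-commutativity of $W$ and $\delta W$: a naive scalar chain rule would give $W^{-1}\delta W$ for $\delta\!\log W$ and miss the symmetrization around $(W+sI)^{-1}$. The integral representation (or, equivalently, the Daleckii--Krein formula applied in the eigenbasis of $W$) is what sidesteps this pitfall, because after left-multiplication by $W$ and cycling in the trace, all non-commutative data are absorbed into scalar spectral integrals. Once this step is in hand, the appearance of the identity in the final answer becomes transparent---it arises from $W\cdot W^{-1}=I$ inside the trace---confirming that the $W$ in the displayed right-hand side should be read as the identity matrix $I$.
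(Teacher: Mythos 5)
Your proof is correct. Note that the paper itself offers no proof of this lemma --- it is imported from \cite{tsuda2005matrix} as a black box --- so there is no in-paper argument to compare against; what you have supplied is a self-contained justification. The Fr\'echet-derivative computation via the integral representation $\log W=\int_0^\infty\bigl(\tfrac{1}{1+s}\I-(W+s\I)^{-1}\bigr)\,\d s$ is sound: differentiating under the integral, cycling the trace, and using $\int_0^\infty \lambda(\lambda+s)^{-2}\,\d s=1$ correctly gives $\Tr(W\,\delta\!\log W)=\Tr(\delta W)$ and hence $\delta\,\Tr(W\log W)=\Tr\bigl((\log W+\I)\,\delta W\bigr)$; the same conclusion also follows from the general identity $\d\,\Tr f(W)=\Tr(f'(W)\,\d W)$ applied to $f(x)=x\log x$. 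You are also right to flag the statement itself: the displayed right-hand side $\log(W)+W$ must be read as $\log(W)+\I$, and the paper's own use of the lemma in deriving the regularized dual --- where the stationarity condition for $\pi$ contains the term $\lambda\I$, coming from $\nabla_\pi\,\lambda\Tr(\pi\log\pi)=\lambda(\log\pi+\I)$ --- confirms that reading. The only implicit hypothesis worth stating explicitly is $W\succ 0$, which is needed both for $\log W$ to be defined and for your integral representation to converge.
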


For exponentiations of Hermitian matrices, we use the Golden-Thompson inequality stated as follows:
\begin{lemma}[\cite{golden1965lower,thompson1965inequality}]\label{lem:Golden-Thompson}
For any Hermitian matrices $A,B\in\C^{n\times n}$, 
\begin{align}
\Tr(\exp(A+B)) \leq \Tr(\exp(A)\exp(B)).
\end{align}
\end{lemma}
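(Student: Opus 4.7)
The plan is to combine the Lie--Trotter product formula with a monotonicity principle indexed by powers of two. The Lie--Trotter formula gives $e^{A+B} = \lim_{n \to \infty}\bigl(e^{A/n}\,e^{B/n}\bigr)^n$, and taking the subsequence $n = 2^k$ we set $X_k := e^{A/2^k}$ and $Y_k := e^{B/2^k}$, which are positive-definite Hermitian and satisfy $X_{k-1}=X_k^2$, $Y_{k-1}=Y_k^2$. Continuity of the trace on matrix-norm convergent sequences then gives $\Tr\bigl((X_k Y_k)^{2^k}\bigr) \to \Tr(e^{A+B})$ as $k \to \infty$. So the problem reduces to showing that the sequence $t_k := \Tr\bigl((X_k Y_k)^{2^k}\bigr)$ is non-decreasing in $k$, since then $\Tr(e^{A+B}) = \lim_{k\to\infty} t_k \le \cdots \le t_1 \le t_0 = \Tr(e^A e^B)$.

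The monotonicity $t_{k+1} \ge \cdots$ -- equivalently $t_{k} \le t_{k-1}$ -- is the main analytical content. I would isolate it as the following key lemma: for any positive-semidefinite Hermitian matrices $X, Y$ and any integer $m \ge 1$,
\[
\Tr\bigl((XY)^{2m}\bigr) \;\le\; \Tr\bigl((X^2 Y^2)^m\bigr).
\]
Applying this with $X = X_k$, $Y = Y_k$, $m = 2^{k-1}$ and noting $X_k^2 = X_{k-1}$, $Y_k^2 = Y_{k-1}$ yields $t_k \le t_{k-1}$, which is exactly what we want.

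To prove the key lemma, I would handle the base case $m = 1$ by a direct Cauchy--Schwarz argument for the Hilbert--Schmidt inner product. Writing $M := XY$ (which has the same eigenvalues as $X^{1/2} Y X^{1/2}$, hence nonnegative real spectrum), one has
\[
\Tr\bigl((XY)^2\bigr) \;=\; \sum_i \lambda_i(M)^2 \;\le\; \sum_i s_i(M)^2 \;=\; \Tr(M M^{\dagger}) \;=\; \Tr\bigl(XY^2 X\bigr) \;=\; \Tr(X^2 Y^2),
\]
using Weyl's inequality $\sum |\lambda_i|^2 \le \sum s_i^2$ in the middle. For general $m \ge 1$, the statement is an instance of the Araki--Lieb--Thirring inequality $\Tr\bigl((A^{1/2} B A^{1/2})^{pr}\bigr) \le \Tr\bigl((A^{p/2} B^p A^{p/2})^r\bigr)$ with $p=2$, $r=m$, $A = Y$, $B = X^2$, after using cyclicity to identify $\Tr((XY)^{2m}) = \Tr((Y^{1/2} X^2 Y^{1/2})^m) \cdot \text{(via spectrum)}$; alternatively one can give a self-contained inductive proof that iterates the $m=1$ case against a variational characterisation of $\Tr(\,\cdot\,)^m$ for PSD matrices.

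The main obstacle is the $m \ge 2$ step of the key lemma: unlike the base case, one cannot just invoke Cauchy--Schwarz on a single HS inner product, because $(XY)^2$ is not Hermitian and the straightforward iteration loses control of the cross terms. This is where invoking (or re-deriving) Araki--Lieb--Thirring is essential. Once that is in place, the remaining steps -- chaining the inequality along $k = 0, 1, 2, \ldots$ and passing to the Trotter limit -- are routine, and no further regularity assumptions on $A, B$ are needed since they are finite-dimensional Hermitian matrices and all exponentials involved are entire matrix functions.
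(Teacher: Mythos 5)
The paper does not prove this lemma at all --- it is stated with a citation to the original 1965 papers of Golden and Thompson and used as a black box --- so there is no in-paper argument to compare against; your task was effectively to supply the classical proof, and your outline is essentially the standard one and is correct in substance. Two remarks. First, a terminology slip: you say the sequence $t_k = \Tr\bigl((X_kY_k)^{2^k}\bigr)$ should be \emph{non-decreasing}, but the chain you then write, $\Tr(e^{A+B}) = \lim_k t_k \le \cdots \le t_1 \le t_0$, and the key lemma you actually prove, $t_k \le t_{k-1}$, show it is \emph{non-increasing}; the mathematics is right, the word is wrong. Second, you overstate the difficulty of the key lemma for $m \ge 2$: there is no need to invoke Araki--Lieb--Thirring, because the very same Weyl majorization argument you use for $m=1$ works verbatim for every $m$. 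Since $XY$ has real nonnegative spectrum, $\Tr\bigl((XY)^{2m}\bigr) = \sum_i \lambda_i(XY)^{2m} \le \sum_i s_i(XY)^{2m} = \Tr\bigl((YX^2Y)^m\bigr) = \Tr\bigl((X^2Y^2)^m\bigr)$ by Weyl's inequality with exponent $2m$ followed by cyclicity of the trace; no induction and no loss of control over cross terms occurs because one never multiplies out $(XY)^2$. With that simplification the whole proof is self-contained and elementary: Weyl majorization, the doubling chain $t_k \le t_{k-1}$, and the Lie--Trotter limit, which is unproblematic in finite dimensions as you note.
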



\section{Properties of the Quantum Wasserstein Semimetric}\label{append:qW-proof}
\subsection{Proofs}
\begin{lemma}
  \label{lem:quant-wass-strong-dual}
  Strong Duality holds for the semidefinite program \eq{quant-wass-vanilla}.
\end{lemma}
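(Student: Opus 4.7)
The plan is to write \eq{quant-wass-vanilla} as a standard-form SDP and invoke Slater's sufficient condition, which states that if one of the programs is feasible and the other is strictly feasible, then both optima are finite and equal. The first step is to check that the primal and dual pair in \eq{quant-wass-vanilla}--\eq{quant-wass-dual} are indeed a Lagrangian dual pair; this is a direct Lagrangian calculation with multipliers $\phi \in \H(\X)$ and $\psi \in \H(\Y)$ for the two partial-trace constraints $\Tr_\Y(\pi)=P$ and $\Tr_\X(\pi)=Q$, noting that the trace-one normalization is implied by either marginal equation and hence does not need a separate multiplier. One can equivalently appeal to Watrous's SDP framework~\cite{watrous2013simpler}: defining the Hermiticity-preserving map $\Phi\colon \pi \mapsto (\Tr_\Y(\pi), \Tr_\X(\pi))$, the primal minimizes $\Tr(\pi C)$ over $\pi \succeq 0$ with $\Phi(\pi)=(P,Q)$, and the dual maximizes $\Tr(Q\psi)-\Tr(P\phi)$ over pairs $(\phi,\psi)$ satisfying $\Phi^*(\phi,\psi) = \phi\otimes\I_\Y - \I_\X\otimes\psi \succeq -C$, matching \eq{quant-wass-dual}.

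The second step is the heart of the argument: exhibit a strictly feasible dual point. Choose $\psi := 0$ and $\phi := t\I_\X$ for any $t>0$. Then the dual slack becomes
\begin{align}
C - \I_\X \otimes \psi + \phi \otimes \I_\Y \;=\; C + t\,\I_{\X\otimes\Y}.
\end{align}
Because $C = \I - \Pi_{\text{sym}} = \frac{1}{2}(\I-\SWAP)$ is the projector onto the antisymmetric subspace, we have $C \succeq 0$, and adding $t\,\I$ with $t>0$ therefore yields $C + t\,\I \succ 0$. Thus the dual constraint holds strictly, establishing Slater's condition.

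The third step is to verify primal feasibility (which is needed together with dual strict feasibility to conclude strong duality and attainment of the primal optimum). The product state $\pi := P \otimes Q$ is positive semidefinite with $\Tr_\Y(\pi)=P$ and $\Tr_\X(\pi)=Q$, and $\Tr(\pi)=1$ automatically. Note that we cannot in general hope for strict primal feasibility, since whenever $P$ or $Q$ is rank deficient, every feasible $\pi$ must inherit that singularity in the corresponding tensor factor; this is precisely why we route the argument through strict feasibility of the dual rather than the primal. Applying the standard SDP strong duality theorem (e.g., Theorem 1.18 of~\cite{watrous2013simpler}) now gives that the two optima coincide and are attained, completing the proof.

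The only subtlety to watch for is that both $\phi,\psi$ live in the (real) space of Hermitian matrices, so the chosen $t\I_\X$ is admissible; no further regularity check is needed, and there is no boundedness-at-infinity issue because the dual objective is linear in the compact variables implicitly pinned down by the strict feasibility region together with the primal bound $\Tr(\pi C)\le \|C\|_\infty=1$.
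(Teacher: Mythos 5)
Your proof is correct and follows essentially the same route as the paper: exhibit the feasible primal point $\pi = P \otimes Q$, exhibit a strictly feasible dual point, and invoke Slater's condition via the SDP framework of~\cite{watrous2013simpler}. The only difference is cosmetic---you take $(\phi,\psi)=(t\I_\X,0)$ with $t>0$ where the paper takes $(\I_\X,-\I_\Y)$; both make the dual slack $C+\text{(positive multiple of }\I)$ strictly positive definite since $C\succeq 0$.
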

\begin{proof}
  Note that $\pi = \P \otimes \Q$ is a feasible solution to the primal program \eq{quant-wass-vanilla}.

  Consider the solution $\psi = -\I_\Y $, $\phi = \I_\X$ for the dual program \eq{quant-wass-vanilla}. Then $  \I_\X \otimes \psi - \phi \otimes \I_\Y - C = -2\I_\X \otimes \I_\Y - C$. For any vector $v \in \X \otimes \Y$, $v^\dagger(-2\I_\X \otimes \I_\Y - C)v = -2 - v^\dagger C v \leq -2 < 0$. Therefore $\I_\X \otimes \psi - \phi \otimes \I_\Y  \prec C$ and the solution is strictly feasible. Since a strictly feasible solution exists to the dual program and the primal feasible set is non-empty, Slater's conditions are satisfied and the lemma holds \cite[Theorem 1 (1)]{watrous2013simpler}.
\end{proof}

\lem{quant-wass-strong-dual} shows that the primal and dual SDPs have the same optimal value and thus \eq{quant-wass-dual} can be taken as an alternate definition of the Quantum Wasserstein distance.

The following theorem establishes some properties of the Quantum Wasserstein distance.
  \begin{theorem}
    \label{thm:qwass-properties-main}
    $\qW(\cdot,\cdot)$ forms a semimetric over the set of density matrices $\D(\X)$ over any space $\X$, i.e., for any $\P, \Q \in \D(\X)$,
\begin{enumerate}
    \item $\qW(\P,\Q) \ge 0$,
    \item $\qW(\P, \Q)=\qW(\Q, \P)$,
    \item $\qW(\P,\Q) = 0$ iff $\P = \Q$.
\end{enumerate}
\end{theorem}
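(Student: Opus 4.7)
The plan is to handle the three claims in order, exploiting the key identity $C = \tfrac{1}{2}(\I - \SWAP) = \I - \Pi_{\text{sym}} = \Pi_{\text{antisym}}$, so in particular $C$ is itself a projector (hence positive semidefinite) and commutes with $\SWAP$.

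\emph{Non-negativity.} The primal \eq{quant-wass-vanilla} is always feasible via $\pi = P\otimes Q$ (its marginals are $P$ and $Q$ since $\Tr(Q)=\Tr(P)=1$), so $\qW(P,Q)$ is a bona fide minimum. Because $\pi\succeq 0$ and $C\succeq 0$, we have $\Tr(\pi C) = \Tr\bigl(\sqrt{\pi}\, C\,\sqrt{\pi}\bigr)\ge 0$, giving claim~1.

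\emph{Symmetry.} I would show that $\pi \mapsto \SWAP\cdot\pi\cdot\SWAP$ is a bijection between the feasible sets of $\qW(P,Q)$ and $\qW(Q,P)$. Conjugation by $\SWAP$ preserves positivity and trace, and interchanges the roles of the two marginals, i.e.\ $\Tr_{\Y}(\SWAP\,\pi\,\SWAP)=\Tr_{\X}(\pi)$ and vice versa. Since $\SWAP\cdot C\cdot\SWAP = C$, the objective is preserved: $\Tr(\SWAP\,\pi\,\SWAP\cdot C)=\Tr(\pi\cdot\SWAP\, C\,\SWAP)=\Tr(\pi C)$. This yields claim~2.

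\emph{The ``if'' direction of claim~3.} Given $P=Q$ with spectral decomposition $P=\sum_i\lambda_i \vec{v}_i\vec{v}_i^{\dagger}$, take the candidate $\pi^{\star}=\sum_i \lambda_i\,(\vec{v}_i\vec{v}_i^{\dagger})\otimes(\vec{v}_i\vec{v}_i^{\dagger})$. It is PSD with unit trace and $\Tr_{\Y}(\pi^{\star})=\Tr_{\X}(\pi^{\star})=P$. Because $\SWAP(\vec{v}_i\otimes\vec{v}_i)=\vec{v}_i\otimes\vec{v}_i$, each summand lies in the symmetric subspace, so $C\,\pi^{\star}=0$ and $\Tr(\pi^{\star}C)=0$. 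Combined with non-negativity, $\qW(P,P)=0$.

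\emph{The ``only if'' direction} is the main obstacle and the one requiring the most care. Suppose $\qW(P,Q)=0$ and let $\pi$ be an optimal coupling. Writing $C=\Pi_{\text{antisym}}$, the equation $\Tr(\pi\,\Pi_{\text{antisym}})=0$ together with $\pi\succeq 0$ and $\Pi_{\text{antisym}}\succeq 0$ forces $\Pi_{\text{antisym}}\,\pi\,\Pi_{\text{antisym}}=0$; equivalently the range of $\pi$ sits inside the symmetric subspace, so $\SWAP\cdot\pi\cdot\SWAP = \pi$. Now I invoke the marginal-swapping identity used in the symmetry step: applied to this $\pi$ it gives
\begin{equation*}
 P \;=\; \Tr_{\Y}(\pi) \;=\; \Tr_{\Y}(\SWAP\cdot\pi\cdot\SWAP) \;=\; \Tr_{\X}(\pi) \;=\; Q,
\end{equation*}
which finishes the proof. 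The subtle point is that having $\pi$ supported on the symmetric subspace forces equality of both marginals of $\pi$, and the cleanest way to see this is via conjugation by $\SWAP$ rather than through a basis-dependent spectral decomposition of $\pi$.
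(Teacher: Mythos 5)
Your proof is correct, and it is in fact more complete than the paper's own argument. For non-negativity the paper verifies $\vec{u}^\dagger C \vec{u} = \sum_{i\le j}|u_{ij}-u_{ji}|^2 \ge 0$ by direct computation, whereas you observe that $C=\I-\Pi_{\text{sym}}$ is itself a projector; these are equivalent. For symmetry the paper simply asserts that the definition ``is symmetric in $\P$ and $\Q$,'' which is slightly glib since exchanging the marginals requires exchanging the tensor factors; your explicit bijection $\pi\mapsto\SWAP\,\pi\,\SWAP$ together with $\SWAP\,C\,\SWAP=C$ is the argument that actually justifies that claim. The substantive difference is item~3: the paper proves only the ``if'' direction (exhibiting the coupling $\pi_0=\sum_i\lambda_i\,\vec{v}_i\vec{v}_i^\dagger\otimes\vec{v}_i\vec{v}_i^\dagger$ when $P=Q$) and never establishes that $\qW(P,Q)=0$ forces $P=Q$, which is needed for the semimetric claim. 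Your argument for this direction is sound: $\Tr(\pi C)=\Tr(\sqrt{\pi}\,C\sqrt{\pi})=0$ with $C$ a projector gives $C\sqrt{\pi}=0$, hence the range of $\pi$ lies in the symmetric subspace, hence $\SWAP\,\pi\,\SWAP=\pi$, and then the marginal-swapping identity yields $P=\Tr_\Y(\pi)=\Tr_\X(\pi)=Q$. (One small point of rigor worth stating explicitly: the feasible set is compact and nonempty, so an optimal $\pi$ exists; and the inference from $\Tr(\pi C)=0$ should pass through $C\sqrt{\pi}=0$ rather than only $C\pi C=0$, though the two are equivalent here by positivity.) Your route thus closes a gap in the published proof.
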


\begin{proof}
We will use the definition of $\qW(\cdot, \cdot)$ from \eq{quant-wass-vanilla} with $\Y$ being an isometric copy of $\X$.
    \begin{enumerate}[leftmargin=*]
    \item Consider the matrix $C = \frac{\I - \SWAP}{2}$. Let $\vec{u} = \sum_{i,j \in \Gamma}u_{ij} \vec{e}_i\vec{e}_j$ be any vector in $\X\otimes\Y=\C^{|\Gamma|}\otimes \C^{|\Gamma|}$. By simple calculation,
\begin{align}
\vec{u}^\dagger C\vec{u} = \sum_{i,j}u_{ij}^\ast(u_{ij} - u_{ji}) = \sum_{i \le j} (u_{ij}^\ast - u_{ji}^\ast)(u_{ij} - u_{ji}) = \sum_{i \le j} |u_{ij} - u_{ji}|^2\ge 0;
\end{align}
thus $C$ is positive semidefinite. As a result, $\Tr(\pi C) \ge 0$ for all $\pi \succeq 0$, and $\qW(\P,\Q) \ge 0$ for all density matrices $\P,\Q \in \D(\X)$.

\item This property trivially holds because of the definition in \eq{quant-wass-vanilla} is symmetric in $\P$ and $\Q$.

\item Suppose that $P = Q$ have spectral decomposition $\sum_i\lambda_i \vec{v}_{i}\vec{v}_{i}^{\dagger}$. Consider $\pi_0 = \sum_i \lambda_i (\vec{v}_{i}\vec{v}_{i}^{\dagger}\otimes \vec{v}_{i}\vec{v}_{i}^{\dagger})$. Then, $\Tr(\pi_0 C) = \Tr(\sum_i \lambda_i (\vec{v}_{i}\vec{v}_{i}^{\dagger}\otimes \vec{v}_{i}\vec{v}_{i}^{\dagger})C) = \Tr(\sum_i \lambda_i (\vec{v}_{i}^\dagger \otimes \vec{v}_{i}^\dagger) C (\vec{v}_{i} \otimes \vec{v}_{i}))$. Since $C = \frac{I - \SWAP}{2}$, $C(\vec{v}_{i} \otimes \vec{v}_{i}) = 0$ . Thus $\Tr(\pi_0 C) = 0$ and since $C$ is positive semidefinite, this must be the minimum. Thus $\qW(\P,\P) = 0$.
\qedhere
\end{enumerate}
\end{proof}

\subsection{Regularized Quantum Wasserstein Distance}
\label{append:regul-quant-wass-appendix}
The regularized primal version of the Quantum Wasserstein GAN is constructed from \eq{quant-wgan-primal} by adding the relative entropy between the optimization variable $\pi$ and the joint distribution of the real and fake states $P \otimes Q$, given by $S(\pi \Vert P \otimes Q) = \Tr(\pi \log(\pi) - \pi\log(P \otimes Q))$:
\begin{align}
  \label{eq:quant-wass-entropic}
  \min_{\pi} \quad & \Tr(\pi C) + \lambda\Tr(\pi \log(\pi) - \pi\log(P \otimes Q)) \\
  \text{s.t.} \quad &\Tr_\Y (\pi) = P, \Tr_\X (\pi) = Q, \pi \in \D(\X \otimes \Y). \nonumber
\end{align}
Here $\lambda$ is a parameter that is chosen during training, and determines the weight given to the regularizer.

To formulate the dual, we use Hermitian Lagrange multipliers $\phi$ and $\psi$ to construct a saddle point problem:
\begin{align}
  & \min_{\pi}\max_{\psi,\phi}\quad \Tr(\pi C) + \lambda\Tr(\pi \log(\pi) - \pi\log(P \otimes Q)) \nonumber \\
          & \quad\quad\quad\quad\quad\quad\quad +\Tr(\phi(\Tr_\Y (\pi) - P)) - \Tr(\psi(\Tr_\X (\pi) - Q)) \nonumber \\
  =& \min_{\pi}\max_{\psi,\phi}\quad \Tr(\pi( C + \phi \otimes \I_\Y -\I_\X \otimes \psi)) - \Tr(P\phi) \nonumber \\
  & \quad\quad\quad\quad\quad\quad\quad + \Tr(Q\psi) + \lambda\Tr(\pi \log(\pi) - \pi\log(P \otimes Q)).
\end{align}
Switching the order of the optimizations:
\begin{align}
  &\max_{\psi,\phi}\min_{\pi}\quad \Tr(\pi( C + \phi \otimes \I_\Y -\I_\X \otimes \psi)) - \Tr(P\phi) \nonumber \\
& \quad\quad\quad\quad\quad\quad\quad  + \Tr(Q\psi) + \lambda\Tr(\pi \log(\pi) - \pi\log(P \otimes Q)).
\end{align}
Solving the inner optimization problem for $\pi$ and using \lem{matrix-gradient}, we have that for the optimal $\pi$,
\begin{align}
(C + \phi \otimes \I_\Y -\I_\X \otimes \psi) + \lambda\log(\pi) + \lambda\I - \log(P \otimes Q) = 0.
\end{align}
Thus the dual optimization problem reduces to
\begin{align}
  \label{eq:quant-entrop-wass-dual}
\max_{\phi,\psi} \quad & \Tr(Q \psi) - \Tr(P \phi) - \frac{\lambda}{e}\Tr\left(\exp\left(\frac{\log(P \otimes Q) -C - \phi \otimes \I_\Y +\I_\X \otimes \psi}{\lambda} \right)\right)\\
\text{s.t.} \quad & \phi \in \H(\X), \psi \in \H(\Y). \nonumber
\end{align}
Note that the additional term in the objective of the dual cannot be directly written as the expected value of measuring a Hermitian operator. However, we can use the Golden-Thompson inequality (\lem{Golden-Thompson}) to upper bound on the objective, which can be written in terms of the expectation as
\begin{align}
  \label{eq:quant-entrop-wass-dual}
  \max_{\phi,\psi} \quad & \Tr(Q \psi) - \Tr(P \phi) - \frac{\lambda}{e}\Tr\left((P \otimes Q)\exp\left(\frac{-C - \phi \otimes \I_\Y +\I_\X \otimes \psi}{\lambda} \right)\right) \nonumber \\
  &= \max_{\phi,\psi} \E_Q[\psi] - \E_P [\phi] - \frac{\lambda}{e}\cdot\E_{P \otimes Q}\left[\exp\left(\frac{ -C - \phi \otimes \I_\Y +\I_\X \otimes \psi}{\lambda} \right)\right] \\
   \text{s.t.} \quad & \ \phi \in \H(\X),\ \psi \in \H(\Y). \nonumber
\end{align}

The regularized optimization problem has the following property:

\begin{lemma}
  Let $f \colon \D(\X) \to \R$ be defined as
  \begin{align}
    \label{eq:def-f-in-lemma}
    &\E_Q[\psi] - \E_P [\phi] - \frac{\lambda}{e}\cdot\E_{P \otimes Q}\left[\exp\left(\frac{ -C - \phi \otimes \I_\Y +\I_\X \otimes \psi}{\lambda} \right)\right] \\
   \text{s.t.} \quad & \ \phi \in \H(\X),\ \psi \in \H(\Y). \nonumber
  \end{align}
  Then $f(P)$ is a differentiable function of $P$.
\end{lemma}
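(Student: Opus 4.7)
The plan is to identify $f(P)$ with the value function of a concave maximization parameterized by $P$ and then invoke Danskin's theorem \cite{danskin1966theory}. Write
\begin{align*}
g(P,\phi,\psi) := \E_Q[\psi] - \E_P[\phi] - \frac{\lambda}{e}\,\E_{P\otimes Q}\!\left[\exp\!\left(\frac{-C - \phi\otimes\I_\Y + \I_\X\otimes\psi}{\lambda}\right)\right],
\end{align*}
so that $f(P) = \sup_{\phi\in\H(\X),\,\psi\in\H(\Y)} g(P,\phi,\psi)$. Since $P$ enters $g$ linearly, for every fixed $(\phi,\psi)$ the function $g(\cdot,\phi,\psi)$ is continuously differentiable with
\begin{align*}
\nabla_P g(P,\phi,\psi) = -\phi - \frac{\lambda}{e}\,\Tr_\Y\!\left((\I_\X\otimes Q)\exp\!\left(\frac{-C - \phi\otimes\I_\Y + \I_\X\otimes\psi}{\lambda}\right)\right).
\end{align*}
Once existence of an inner maximizer $(\phi^*(P),\psi^*(P))$ is established and the gradient above is shown to be unambiguous on the constraint manifold of $\D(\X)$, Danskin's theorem yields $\nabla_P f(P) = \nabla_P g(P,\phi^*(P),\psi^*(P))$.

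The first step is to verify concavity of $g$ in $(\phi,\psi)$. The terms $\E_Q[\psi] - \E_P[\phi]$ are linear, and the map $H \mapsto \Tr((P\otimes Q)\exp(H))$ is convex in Hermitian $H$ for $P\otimes Q \succeq 0$ (a standard consequence of Lieb-type trace-exponential inequalities); composing with the affine assignment $(\phi,\psi) \mapsto -C - \phi\otimes\I_\Y + \I_\X\otimes\psi$ preserves convexity, and the minus sign then renders this contribution concave in $(\phi,\psi)$. The second step is to handle non-compactness of $\H(\X)\times\H(\Y)$ via coercivity modulo symmetry: $g$ is exactly invariant under the one-parameter shift $(\phi,\psi) \mapsto (\phi + c\,\I_\X,\,\psi + c\,\I_\Y)$, since $\Tr P = \Tr Q = 1$ makes the contributions of $c$ to $\E_P[\phi]$ and $\E_Q[\psi]$ cancel and the exponent is itself invariant, but transverse to this null line the exponential regularizer dominates and $g \to -\infty$ as $\|(\phi,\psi)\| \to \infty$. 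Concavity plus this coercivity give existence of a maximizer, unique modulo the shift.

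Under the symmetry shift, $\nabla_P g$ changes by $-c\,\I_\X$, which is normal to the affine slice $\{P : \Tr P = 1\}$ and therefore inert when $\nabla_P f$ is interpreted as a tangent vector on $\D(\X)$. Danskin's theorem then applies on the quotient and delivers the claimed differentiability. As an alternative sanity check, smoothness of $f$ can also be read off from the primal side: the entropic term in \eq{qwgan-primal-regularized} makes the primal objective strictly convex in $\pi$, so the primal optimum $\pi^*(P)$ is unique and depends smoothly on $P$ by an implicit-function argument on the KKT conditions.

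The main obstacle will be rigorously establishing coercivity, particularly when $P$ is rank-deficient. In that case $P\otimes Q$ has a nontrivial kernel and the exponential penalty only controls $(\phi,\psi)$ in directions that pair nontrivially with its support, so a naive coercivity bound can degenerate. A natural fix is to first prove the lemma for strictly positive-definite $P$ (where $P\otimes Q$ is full rank and coercivity follows from standard Bregman-divergence estimates) and then extend to the closure of $\D(\X)$ by continuity and concavity of $f$.
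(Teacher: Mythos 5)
Your proposal follows essentially the same route as the paper: both express $f$ as the value function of the inner maximization and invoke Danskin's theorem, with the paper simply asserting concavity, a unique maximizer, and a compact region containing it. Your additional observations are in fact sharper than the paper's own argument: the invariance under $(\phi,\psi)\mapsto(\phi+c\,\I_\X,\,\psi+c\,\I_\Y)$ shows the maximizer cannot be literally unique (the paper claims it is, based on ``non-zero second derivatives''), and your point that coercivity degenerates for rank-deficient $P$ flags a genuine gap the paper does not address; your quotient argument for the gradient and the restrict-to-full-rank-then-extend strategy are the appropriate repairs.
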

\begin{proof}
  The optimization objective \eq{def-f-in-lemma} is clearly convex with respect to its parameters. Furthermore, the second derivatives are non-zero for all $\phi,\psi$, and the optimum hence is reached at a unique point. The objective function can be rewritten as
  \begin{align}
    \E_{P \otimes Q} \left(- \phi \otimes \I_\Y +\I_\X \otimes \psi - \frac{\lambda}{e}\cdot\exp\left(\frac{ -C - \phi \otimes \I_\Y +\I_\X \otimes \psi}{\lambda}\right)\right).
  \end{align}
Since $P$ and $Q$ are density matrices and are constrained to lie within a compact set, there exists a compact region $\mathbb{S}$ that is independent of $P$ (but may depend on $\lambda$) such that the maximum lies inside $\mathbb{S}$. $f(P)$ can therefore be written as $f(P) = \max g(P,\phi,\psi)$, where $\phi,\psi \in \S$, $g$ is convex, and attains its maximum at a unique point. By Danskin's theorem \cite{danskin1966theory}, the result follows.
\end{proof}


\section{More Details on Quantum Wasserstein GAN}\label{append:impl-disc-phys}
\subsection{Parameterization of the Generator}
\label{append:param-gener}

The generator $G$ is a quantum operation that maps a fixed distribution $\rho_0$ to a quantum state $P$. Two pure distributions (states with rank $1$) are mapped to each other by unitary matrices. $\rho_0$ is fixed to be the pure state $\bigotimes_{i=1}^n e_0$. If the target state is of rank $r$, $G$ can be parameterized by an ensemble $\{(p_1,U_1),\dots,(p_r,U_r)\}$ of unitary operations $U_i$, each of which is applied with probability $p_i$. Applying a unitary $U_i$ to $\rho_0$ produces the state $U_i\rho_0 U_i^\dagger$. Applying $G$ to $\rho_0$ thus produces the fake state $p_i U_i\rho_0 U_i^\dagger$.

Each Unitary $U_i$ is parameterized as a quantum circuit consisting of simple parameterized $1$- or $2$- qubit Pauli-rotation quantum gates. An $n$-qubit Pauli-rotation gate $R_\sigma(\theta)$ is given by $\exp\left(\frac{i\theta\sigma}{2}\right)$ where $\theta$ is a real parameter, and $\sigma$ is a tensor product of $1$ or $2$ Pauli matrices. Pauli-rotation gates can be efficiently implemented on quantum computers. Thus each unitary $U_i$ can be expressed as $U_i = \prod_j e^{\frac{i\theta_{i,j}\sigma_{i,j}}{2}}$.

\subsection{Parameterization of the Discriminator}
\label{append:param-discr}
The optimization variables in the discriminator are Hermitian operators, $\phi$ and $\psi$. There are two common parameterizations for a Hermitian matrix $H$:
\vspace{-1mm}
\begin{enumerate}[leftmargin=*]
\item As $U^\dagger H_0 U$, where $U$ is a parameterized unitary operator, and $H_0$ is a simpler fixed Hermitian matrix that is easy to measure. Measuring $H$ then corresponds to applying the operator $U$ and then measuring $H_0$.
\item As a linear combination $\sum_{i = 0}^{\dim(H)} \alpha_i H_i$, where $H_i$s are fixed Hermitian matrices that are easy to measure. Measuring $H$ corresponds to measuring each $H_i$ to obtain the expectation value $m_i$, and then returning $\sum_{i = 0}^{\dim(H)} \alpha_i m_i$ as the expected value of measuring $H$.
\end{enumerate}
  We choose the latter option because it allows $\xi_R$ to be conveniently approximated by a linear combination of simple Hermitian matrices. Thus $\phi$ and $\psi$ are represented by $\sum_k\alpha_kA_k$ and $\sum_l\beta_lB_l$ where $A_k,B_l$ are tensor products of Pauli matrices. The $\alpha_k$s, $\beta_l$s constitute the parameters of the discriminator.

The overall structure of the Quantum Wasserstein GAN is given in \fig{QWGAN-structure}.

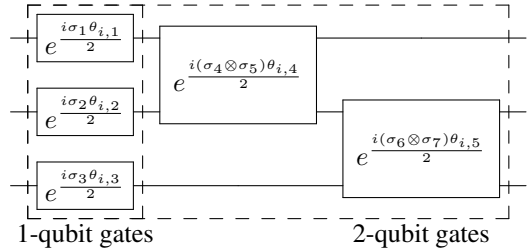
\begin{figure}[htbp]
\centering\hspace{0mm}
\Qcircuit @C=1em @R=1em {
  &\gate{e^{\frac{i\sigma_1 \theta_{i,1}}{2}}} & \multigate{1}{e^{\frac{i(\sigma_4 \otimes \sigma_5) \theta_{i,4}}{2}}} & \qw & \qw \\
  &\gate{e^{\frac{i\sigma_2 \theta_{i,2}}{2}}} & \ghost{e^{\frac{i(\sigma_4 \otimes \sigma_5) \theta_{i,4}}{2}}} & \multigate{1}{e^{\frac{i(\sigma_6 \otimes \sigma_7) \theta_{i,5}}{2}}} &\qw \\
  &\gate{e^{\frac{i\sigma_3 \theta_{i,3}}{2}}} & \qw & \ghost{e^{\frac{i(\sigma_6 \otimes \sigma_7) \theta_{i,5}}{2}}} & \qw \gategroup{1}{2}{3}{2}{0.7em}{--} \gategroup{1}{2}{3}{4}{0.7em}{--} \\
   & \mbox{1-qubit gates} & & \mbox{2-qubit gates}
}
\caption{Example parameterization of a unitary $U_i$ acting on 3 qubits. There are $12$ possible $1$-qubit gates and $48$ possible $2$-qubit gates.}
\label{fig:QWGAN-unitaries}
\end{figure}

\begin{figure}[htbp]
\centering\hspace{0mm}
\Qcircuit @C=1em @R=1em {
  \lstick{\bigotimes_{i=1}^d \vec{e}_0} & \gate{\{(p_i,U_i)\}} & \measureD{\phi} &\multicgate{3}{L} \\
  \lstick{Q} & \qw & \measureD{\psi} & \cghost{L} \\
  \lstick{\bigotimes_{i=1}^d \vec{e}_0} & \gate{\{(p_i,U_i)\}} & \multimeasureD{1}{\frac{\lambda}{e}\Tr\left(\exp\left(\frac{\log(P \otimes Q) -C - \phi \otimes \I_\Y +\I_\X \otimes \psi}{\lambda} \right)\right)} & \cghost{L}\\
  \lstick{Q} & \qw & \ghost{\frac{\lambda}{e}\Tr\left(\exp\left(\frac{\log(P \otimes Q) -C - \phi \otimes \I_\Y +\I_\X \otimes \psi}{\lambda} \right)\right)} & \cghost{L}
}
\caption{The structure of the quantum WGAN. Here $Q$ is the input state and $\vec{e}_0$ is the $0^{\text{th}}$ computational basis vector, meaning that the corresponding system is empty at the beginning. The final gate $L$ combines the outputs of the measurements of $\phi,\psi,\xi_R$ to produce the final loss function.}
\label{fig:QWGAN-structure}
\end{figure}
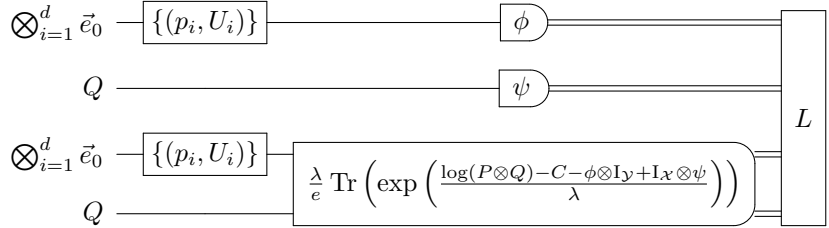

\subsection{Estimating the Loss Function}
\label{append:estim-loss-funct}
The loss function is given by $\Tr(Q\psi) - \Tr(P\phi) - \Tr((P \otimes Q)\xi_R) = \E_Q[\psi] - \E_P[\phi] - \E_{P \otimes Q}[\xi_R]$ where $\xi_R$ is the Hermitian corresponding to the regularizer term $\frac{\lambda}{e}\exp\left(\frac{-C - \phi \otimes \I_\Y +\I_\X \otimes \psi}{\lambda} \right)$.

The fake state $P$ is generated by applying a quantum operation $G$ to a fixed quantum state $\rho_0$. The quantum operation is represented by applying a set of unitary operations $\{U_1,U_2,\ldots,U_k\}$ with corresponding probabilities $\{p_1,p_2,\ldots,p_k\}$ where $k$ is the rank of the final state that would be generated:
\begin{align}
  \label{eq:1}
  P = \sum_{i \in \range{k}} p_i U_i \rho_0 U_i^\dagger.
\end{align}

\begin{lemma}
  \label{lem:lc-unitaries}
  Given a quantum state $\rho = \sum_{i=1}^k \alpha_i \rho_i$ and a Hermitian matrix $H$ then $\E_\rho(H)$ can be estimated given only the ability to generate each $\rho_i$ and to measure $H$.
\end{lemma}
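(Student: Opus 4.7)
The plan is to exploit the linearity of the trace to reduce estimation of $\E_\rho(H)$ to estimation of each $\E_{\rho_i}(H)$ separately. Specifically, by definition
\begin{align}
\E_\rho(H) = \Tr(\rho H) = \Tr\!\Bigl(\sum_{i=1}^k \alpha_i \rho_i H\Bigr) = \sum_{i=1}^k \alpha_i \Tr(\rho_i H) = \sum_{i=1}^k \alpha_i \E_{\rho_i}(H),
\end{align}
so the target quantity is an affine combination, with known coefficients $\alpha_i$, of quantities that are directly accessible by hypothesis.

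First I would write down this identity and observe that each term $\E_{\rho_i}(H)$ can be estimated to arbitrary additive accuracy using the stated measurement primitive: repeatedly prepare $\rho_i$, measure $H$, and average the outcomes. The number of samples required for precision $\eps_i$ is standard and follows from Hoeffding-type concentration once we note that $H$ is bounded (its operator norm gives a bound on individual measurement outcomes). I would then combine these estimates as $\sum_i \alpha_i \widehat{\E_{\rho_i}(H)}$.

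For accuracy propagation, set each per-term precision to $\eps_i = \eps / (k \max_i |\alpha_i|)$, so that by the triangle inequality the total error of the combined estimator is at most $\eps$; this also yields a bound on the sample complexity that is polynomial in $k$, $\max_i |\alpha_i|$, $\|H\|$, and $1/\eps$. Since the statement only asserts that estimation is possible from the two primitives, no subtlety beyond linearity and a standard concentration bound is required, so I expect no serious obstacle; the only thing to be careful about is that the $\alpha_i$ need not be probabilities (they could be arbitrary real coefficients, as arises for the $\xi_R$ expansion in the surrounding discussion), which is precisely why we cannot simply sample an index $i$ with probability $\alpha_i$ and must instead combine the per-component estimates deterministically.
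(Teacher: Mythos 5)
Your proof is correct and follows essentially the same route as the paper: both rest on the linearity identity $\Tr(\rho H)=\sum_i\alpha_i\Tr(\rho_i H)$ and then combine per-component estimates, with the deterministic weighted sum $\sum_i\alpha_i\widehat{m}_i$ being exactly one of the two estimators the paper proposes (the paper also notes that, since $\rho$ and the $\rho_i$ are states, the $\alpha_i$ form a probability distribution, which additionally permits an index-sampling estimator). Your explicit error-propagation and sample-complexity remarks go slightly beyond what the paper writes down, but they do not change the argument.
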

\begin{proof}
  Since $\rho$ is a quantum state $\{\alpha_1,\dots,\alpha_k\}$ must form a probability distribution. Thus,
  \begin{align}
  \label{eq:2}
    \E_\rho[H] = \Tr[\rho H] = \Tr\Big[\sum_i \alpha_i \rho_i H\Big] = \sum_i \alpha_i \Tr[\rho_i H] = \sum_i \alpha_i\E_{\rho_i}[H]=\E_\alpha\E_{\rho_i}[H].
  \end{align}
  Thus we can measure the expected value of $H$ measured on $\rho$, by sampling an $i$ with probability $\alpha_i$, measuring the expected value of $H$ on $\rho_i$, and then computing the expectation over $i$ sampled from the distribution $\alpha$. We can also simply measure the expectation value $m_i$ corresponding to each $\rho_i$ and return $\sum_i \alpha_i m_i$ as the estimate.
\end{proof}

The unitaries $U_i$ are parameterized by a network of gates of the form $e^{i \theta_{i,j} \sigma_{i,j}}$ where $\sigma_{i,j}$ is a tensor product of the matrices $\sigma_x,\sigma_y,\sigma_z,I$ acting on some/all of the registers. With a sufficient number of such gates, any unitary can be represented by an appropriate choice of $\theta_{i,j}$. Since each $U_i$ is expressed as a composition of simple parameterized gates each of them can be implemented on a quantum computer and thus each $U_i\rho_0U_i^\dagger$ can be generated.

Note that $P = \sum_{i \in \range{k}} p_i U_i \rho_0 U_i^\dagger$ and $P \otimes Q = \sum_{i \in \range{k}} p_i (U_i \rho_0 U_i^{\dagger} \otimes Q)$. From \lem{lc-unitaries}, if $\phi$ and $\xi_R$ can be measured, we can estimate the terms $\E_P[\phi]$ and $\E_{P \otimes Q}[\xi_R]$. Next we show how to measure $\phi,\psi,\xi_R$ where $\phi,\psi$ are parameterized as a linear combination of tensor products of the Pauli matrices $\sigma_X,\sigma_Y,\sigma_Z,\sigma_I$.
\begin{lemma}
  \label{lem:lc-hermitians}
  Any Hermitian that is expressed as a linear combination $\sum_i\alpha_iH_i$ of Hermitian matrices $H_i$ that can be measured on a quantum computer, can also be measured on a quantum computer.
\end{lemma}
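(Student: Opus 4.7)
The plan is to exploit linearity of the trace against the Hermitian being measured. Since measurement on a quantum computer of a Hermitian operator $H_i$ means the ability to estimate $\E_\rho[H_i] = \Tr(\rho H_i)$ for any quantum state $\rho$ that we can prepare, and since the trace is linear in its second argument, we have
\begin{align}
\E_\rho[H] = \Tr\Big(\rho \sum_i \alpha_i H_i\Big) = \sum_i \alpha_i \Tr(\rho H_i) = \sum_i \alpha_i \E_\rho[H_i].
\end{align}
Hence estimating $\E_\rho[H]$ reduces to estimating each $\E_\rho[H_i]$ separately and forming the corresponding classical linear combination of the estimates.

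Concretely, I would describe the following procedure: for each index $i$, prepare fresh copies of $\rho$ and run the quantum measurement routine for $H_i$ (which exists by assumption) to obtain an estimate $m_i$ of $\E_\rho[H_i]$; then classically output $\hat{m} = \sum_i \alpha_i m_i$ as the estimate for $\E_\rho[H]$. By linearity of expectation over the random measurement outcomes, $\E[\hat{m}] = \sum_i \alpha_i \E_\rho[H_i] = \E_\rho[H]$, so the estimator is unbiased. This is completely analogous to the argument in \lem{lc-unitaries}, except that the linearity is now being applied to the Hermitian being measured rather than to the state being measured.

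The only subtlety worth noting is the sample complexity, not correctness: the variance of $\hat{m}$ scales with $\sum_i |\alpha_i|^2 \operatorname{Var}(m_i)$ (or $(\sum_i |\alpha_i|)^2$ in a worst-case bound), so the overhead of this reduction is polynomial in the number of terms and in the coefficient magnitudes $|\alpha_i|$. In our application $\phi = \sum_k \alpha_k A_k$ and $\psi = \sum_l \beta_l B_l$, where each $A_k, B_l$ is a tensor product of Pauli matrices and hence directly measurable on a quantum computer via a Pauli measurement (as recalled in \append{prelim}), so the lemma immediately implies that $\E_P[\phi]$ and $\E_Q[\psi]$ can be estimated, which is precisely what is needed to evaluate the loss function terms in \eq{qwgan-regularized}. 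There is no real obstacle; the content of the lemma is simply to record that linearity of the expectation allows one to bootstrap the measurement primitive from the Pauli basis to any Hermitian expressed in that basis.
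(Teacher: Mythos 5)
Your proof is correct and follows essentially the same route as the paper's: both rely on linearity of the trace in the Hermitian argument to reduce $\E_\rho[H]$ to the classical linear combination $\sum_i \alpha_i \E_\rho[H_i]$ of separately measurable quantities. Your added remarks on unbiasedness and variance overhead are consistent with the paper's brief discussion of sampling the terms according to (a normalization of) the coefficients, so there is nothing missing.
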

\begin{proof}
  For any fixed state $\rho$,
\begin{align}
  \label{eq:5}
  \E_\rho[H] = \Tr[\rho H] = \Tr\Big[\rho\sum_i \alpha_i H_i\Big] = \sum_i \alpha_i \Tr[\rho H_i] = \sum_i \alpha_iE_\rho[H_i].
\end{align}
Thus each of the Hermitians $H_i$ can be separately measured and the final result is the weighted average of the corresponding expectation values with coefficients $\alpha_i$.

If the $\alpha_i$ form a probability distribution, the expectation can be estimated by sampling a batch of indices from the distribution of $\alpha_i$, measuring $H_i$, and estimating the expectation averaging over the sampled indices. This procedure can be more efficient if some of the $\alpha_i$ are of very small magnitude in comparison to the others. Note that any Hermitian that can be written by as a linear combination $\sum_i\beta_i H_i$ where each $H_i$ is easy to measure can be transformed such that the coefficients form a probability distribution as $(\sum_i |\beta_i|) \sum_i \frac{|\beta_i|}{\sum_i |\beta_i|} \sgn(\beta_i) H_i$. If $H_i$ can be measured on a quantum computer, $-H_i$ can also be measured by measuring $H_i$ and negating the result.
\end{proof}

Tensor products of Pauli matrices can be measured on quantum computers using elementary techniques \cite{nielsen2002quantum}. As a result, \lem{lc-hermitians} implies that $\phi,\psi$ can be measured on a quantum computer.

Now, we prove the following lemma for expressing the regularizer term $\xi_R$:
\begin{lemma}
  \label{lem:lc-regularizer}
  The Hermitian corresponding to the regularizer term $\xi_R$ can be approximated via a linear combination of Hermitians from $\{\Sigma,\SWAP \cdot \Sigma\}$ where $\Sigma$ is a tensor product of 2-dimensional Hermitian matrices.
\end{lemma}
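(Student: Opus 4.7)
The plan is to approximate $\xi_R = \frac{\lambda}{e}\exp\!\left(\frac{-C - \phi \otimes \I_\Y + \I_\X \otimes \psi}{\lambda}\right)$ by a truncated Taylor series in the matrix exponential, then argue that each term in that truncation lies in $\mathrm{span}\{\Sigma\} \cup \mathrm{span}\{\SWAP\cdot\Sigma\}$, where $\Sigma$ denotes a tensor product of $2\times 2$ Hermitians (in particular, of Paulis, which form a basis).

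First, observe that the matrix $M := -C - \phi\otimes\I_\Y + \I_\X\otimes\psi$ appearing in the exponent is itself a linear combination of a $\SWAP$ and tensor products of Paulis. Indeed, $-C = \tfrac{1}{2}(\SWAP - \I)$, while $\phi = \sum_k \alpha_k A_k$ and $\psi = \sum_l \beta_l B_l$ with $A_k, B_l$ tensor products of Paulis, so $\phi\otimes\I_\Y$ and $\I_\X\otimes\psi$ are tensor products of Paulis. Thus $M = c_0 \SWAP + \sum_j c_j \Sigma_j$ for real scalars $c_j$ and Pauli-tensor $\Sigma_j$. Taylor expanding gives
\begin{align}
\xi_R \;=\; \frac{\lambda}{e}\sum_{n=0}^{\infty} \frac{1}{n!\,\lambda^n}\, M^n,
\end{align}
so it suffices to show that each $M^n$ is a linear combination of elements from $\{\Sigma, \SWAP\cdot\Sigma\}$; we then truncate at some sufficiently large order $N$, with truncation error controlled in operator norm by $\|M\|/\lambda$ (which is uniformly bounded because $\phi,\psi$ range over a compact region during training).

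The key algebraic lemma is that any word in $\SWAP$ and Pauli-tensor factors reduces to a scalar multiple of either a Pauli-tensor or $\SWAP$ times a Pauli-tensor. This follows from two facts: (i) $\SWAP^2 = \I$, so even numbers of $\SWAP$s annihilate; and (ii) $\SWAP\cdot(A\otimes B) = (B\otimes A)\cdot\SWAP$, so any $\SWAP$ can be commuted past a Pauli-tensor at the cost of swapping the two halves of that tensor, which is still a Pauli-tensor. Iterating (ii) lets us push every $\SWAP$ to the leftmost position; iterating (i) then collapses them to either $\I$ or a single $\SWAP$. Expanding $M^n$ by the multinomial theorem gives a sum of such words, and the lemma reduces each word to the claimed form.

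The main obstacle is just bookkeeping: making sure the non-commutativity of $C$, $\phi\otimes\I$, and $\I\otimes\psi$ is handled correctly when expanding $M^n$, and confirming that the truncation error is small enough for the purposes of training (so one should quantify $N$ in terms of $\|M\|/\lambda$ and the target precision). Neither is deep: the multinomial expansion together with the $\SWAP$-reduction rule above is algorithmic, and $\|\exp(M/\lambda) - \sum_{n=0}^{N} (M/\lambda)^n/n!\| \le e^{\|M\|/\lambda}\cdot(\|M\|/\lambda)^{N+1}/(N+1)!$ gives a standard tail bound. After truncation and simplification we obtain $\xi_R \approx \sum_i \gamma_i \Sigma_i + \sum_i \delta_i (\SWAP \cdot \Sigma_i')$, which matches the claim and, combined with the fact that $\SWAP$ is easy to implement on quantum hardware, lets $\E_{P\otimes Q}[\xi_R]$ be estimated via Pauli measurements composed with a $\SWAP$ test as promised in the main text.
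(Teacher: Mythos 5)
Your proposal follows essentially the same route as the paper's proof: Taylor-expand the matrix exponential, note that the exponent is a linear combination of $\SWAP$ and Pauli tensors, and use the two closure facts (products of tensor-product factors remain tensor products, and $\SWAP$ commutes past a tensor product at the cost of swapping its two halves, with $\SWAP^2=\I$) to reduce every word to the form $\Sigma$ or $\SWAP\cdot\Sigma$. Your explicit truncation-error bound is a small addition not spelled out in the paper's proof of this lemma (the paper instead goes on to introduce a Trotter-relaxed regularizer $\xi'_R$ to get a more concrete closed form), but the core argument is identical.
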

\begin{proof}
  Since $C = \frac{\I - \SWAP}{2}$,
\begin{align}
  \label{eq:6}
  \exp\left(\frac{ -C - \phi \otimes \I_\Y +\I_\X \otimes \psi}{\lambda} \right) = \exp\left(\frac{\SWAP - \I - 2\phi \otimes \I_\Y + 2\I_\X \otimes \psi}{2\lambda}\right).
\end{align}
Observe the following two facts:
\begin{itemize}[leftmargin=*]
\item if $\Sigma_1$ and $\Sigma_2$ are both tensor products of 2-dimensional Hermitian matrices, then $\Sigma_1 \cdot \Sigma_2$ is also a tensor product of 2-dimensional Hermitian matrices;
\item if $\Sigma$ is a tensor product of 2-dimensional Hermitian matrices, then $\SWAP \cdot \Sigma \cdot \SWAP$ is also a tensor product of 2-dimensional Hermitian matrices.
\end{itemize}
As a result, any integral power of $\SWAP - \I - 2\phi \otimes \I_\Y + 2\I_\X \otimes \psi$ can be written as a linear combination of the matrices $\{\Sigma,\SWAP\cdot\Sigma\}$ where $\Sigma$ is a tensor product of 2-dimensional Hermitian matrices. Thus any Taylor approximation of $\exp(\SWAP - \I - 2\phi \otimes \I_\Y + 2\I_\X \otimes \psi)$ is a linear combination of the same Hermitian matrices, each of which can be easily measured on a quantum computer. Thus the Taylor series for the exponential can be used to approximately measure the regularizer term.

A representation as a linear combination of the Hermitians $\{\Sigma, \SWAP \cdot \Sigma\}$, where $\Sigma$ is a tensor product of Pauli matrices, can be obtained more easily for a relaxed regularizer term
\begin{align}
  \label{eq:relaxed-regularizer}
\xi'_R=\exp\left(\frac{-C}{2\lambda}\right)\exp\left(\frac{- \phi \otimes \I_\Y +\I_\X \otimes \psi}{\lambda} \right)\exp\left(\frac{-C}{2\lambda}\right);
\end{align}
this is motivated by the Trotter formula~\cite{trotter1959product} of matrix exponentiation: for any Hermitian matrices $A,B$ such that $\|A\|,\|B\|\leq\delta\leq 1$, $\|e^{A+B}-e^{A}e^{B}\|=O(\delta^{2})$ but $\|e^{A+B}-e^{A/2}e^{B}e^{A/2}\|=O(\delta^{3})$.
Using this regularizer gives us a concrete closed form for $\xi'_R$ as a linear combination of simpler Hermitian matrices. It is less computationally intensive to compute than the original regularizer, since the only operation acting on $2n$ qubits at the same time is $\SWAP$. This relaxation also yields good numerical results in practice.

Since $(-\phi \otimes \I_\Y)(\I_\X \otimes \psi) = (\I_\X \otimes \psi)(-\phi \otimes \I_\Y) = (-\phi \otimes \psi)$, the central term in the RHS of \eq{relaxed-regularizer} is an exponential of commuting terms. If $A$ and $B$ are commuting matrices, we have $\exp(A+B) = \exp(A)\exp(B)$, and hence
\begin{align}
  \label{eq:25}
  \xi'_R = \exp\Big(\frac{-C}{2\lambda}\Big)\exp\Big(\frac{- \phi}{\lambda}\Big) \otimes \exp\Big( \frac{\psi}{\lambda} \Big)\exp\Big(\frac{-C}{2\lambda}\Big).
\end{align}

We choose $\phi$ and $\psi$ to be tensor products of terms of the form $a\sigma_x + b\sigma_y + c\sigma_z + d\I$. It can be verified that $\sigma_i\sigma_i = \I$ and $\sigma_i\sigma_j + \sigma_j\sigma_i = 2\delta_{i,j}\I$ and therefore $(a\sigma_x + b\sigma_y + c\sigma_z)^2 = (a^2 + b^2 + c^2)\I$. Given $r = \bigotimes_{i=1}^n (a_i\sigma_x + b_i\sigma_y + c_i\sigma_z + d_i\I)$, we therefore have $r^2 = \bigotimes_{i=1}^n \left(d_i(a_i\sigma_x + b_i\sigma_y + c_i\sigma_z + d_i\I) + \Pi_{i=1}^n(a_i^2 + b_i^2 + c_i^2 + d_i^2)\I\right)$ and therefore by induction,
\begin{align}
  \label{eq:9-2}
  r^k= \bigotimes_{i=1}^n\left(d_i^{k-1}(a_i\sigma_x + b_i\sigma_y + c_i\sigma_z + d_i\I) + \left(\sum_{j=0}^{k-2}d_i^j\right)(a_i^2 + b_i^2 + c_i^2 + d_i^2)\I\right).
\end{align}
Eq. \eq{9-2} can be used to expand $\exp(- \phi/\lambda) \otimes \exp(\psi/\lambda)$ using the truncated Taylor series for the exponential. Thus $\exp(- \phi/\lambda) \otimes \exp(\psi/\lambda)$ can be approximated by a linear combination of gates in $\Sigma$ up to any desired accuracy.

In addition, $C = \frac{\I - \SWAP}{2}$ implies that $C$ is a projector, i.e., $C^k = C$ for all $k\in\N^{*}$ and $C^0 = \I$. This can be used to express $\exp(C)$ in terms of only $\I$ and $C$:
\begin{align}
  \label{eq:10}
  \exp\Big(\frac{-C}{2}\Big) = \I + \sum_{j=1}^{\infty} \frac{C}{(-2)^{j}j!} = \I + \Big[\exp\Big(\frac{-1}{2}\Big) - 1\Big]C.
\end{align}
Using \eq{9-2} and \eq{10} we can compute an approximate expression (with any desired accuracy) for the relaxed regularizer $\xi'_R$ as a linear combination of the Hermitian $\{\Sigma,\SWAP\cdot\Sigma\}$ where $\Sigma$ is a tensor product of Hermitian matrices.
\end{proof}

Finally from \lem{lc-unitaries},\lem{lc-hermitians},\lem{lc-regularizer}, each of the terms $\E_Q[\psi], \E_P[\phi],\E_{P \otimes Q}[\xi_R]$ can be computed on a quantum computer.

\subsection{Direct Estimation of Gradients}
\label{append:estim-grad-directly}
In this subsection, we show how the gradients with respect to the parameters of the qWGAN can be directly estimated using quantum circuits. Suppose we have the following parameterization for the optimization variables:
\begin{align}
\rho_0 = \bigotimes_{i=1}^d \vec{e}_0\vec{e}_0^\dagger,\qquad P = \sum_{i=1}^{r} p_i U_i \rho_0 U_i^\dagger,\qquad U_i = \prod_j e^{\frac{i\theta_{i,j}H_{i,j}}{2}}
\end{align}
and
\begin{align}
  \phi = \sum_k \alpha_k A_k,\qquad\psi = \sum_l \beta_l B_l,
\end{align}
where $H_j,A_k,B_l$ are tensor products of Pauli matrices. The parameters of the generator are given by the variables $p_i, \theta_{i,j}$ and the parameters of the discriminator are given by $\alpha_k, \beta_l$. As shown in \lem{lc-regularizer}, the regularizer term $R$ can be written as $\sum_q r_qR_q$ where each $R_q$ is either a tensor product of Pauli matrices or a product of $\SWAP$ with a tensor product of Pauli matrices. Thus the loss function is given by
\begin{align}
  \label{eq:17}
  L = \Tr[Q \psi] - \Tr[P \phi] - \Tr\left[(P \otimes Q) R\right],
\end{align}
and hence
\begin{align}
  \label{eq:18}
  \frac{\partial L}{\partial p_i} = - \Tr[U_i\vec{e}_0\vec{e}_0^\dagger U_i^\dagger \phi] - \Tr\left[(U_i\vec{e}_0\vec{e}_0^\dagger U_i^\dagger \otimes Q) R\right].
\end{align}
To compute the partial derivative with respect to the parameters $p_i$, we create a fake state using only the unitary $U_i$, and compute the regularizer term as shown before:
\begin{align}
\label{eq:19} \frac{\partial L}{\partial \alpha_k} &= -\Tr[PA_k] - \Tr\left[(P \otimes Q)\frac{(A_k\otimes \I_\Y)R}{\lambda}\right]; \\
\label{eq:20} \frac{\partial L}{\partial \beta_l} &= \Tr[QB_l] - \Tr\left[(P \otimes Q)\frac{(\I_\X \otimes B_l)R}{\lambda}\right].
\end{align}

Clearly $(A_k \otimes \I_\Y)R$ and $(\I_\X \otimes B_l)R$ can be written as linear combinations of products of $\SWAP$ and tensor products of Pauli matrices, because such form exists for $A_k, B_l,R$. Thus these gradients can be measured as shown in \lem{lc-hermitians}.

Regarding the gradients with respect to $\theta_{i,j}$, we have
\begin{align}\label{eq:23}
\frac{\partial L}{\partial \theta_{i,j}} = \frac{\partial \Tr[\phi(U_i \rho_0 U_i^\dagger)]}{\partial \theta_{i,j}} - \frac{\partial \Tr[\xi_R(U_i \rho_0 U_i^\dagger \otimes Q)]}{\partial \theta_{i,j}}.
\end{align}
The terms $\frac{\partial \Tr[\phi(U_i \rho_0 U_i^\dagger)]}{\partial \theta_{i,j}},\frac{\partial \Tr[\xi_R(U_i \rho_0 U_i^\dagger \otimes Q)]}{\partial \theta_{i,j}}$ can be evaluated by modifying the quantum circuits for $U_i$ using with an ancillary control register, using previously known techniques~\cite[Section III. B]{schuld2019evaluating}. This allows us to evaluate the partial derivatives of the loss function w.r.t. the $\theta_{i,j}$ parameters.

\subsection{Computational Cost of Evaluating the Loss Function}
\label{append:comp-cost-eval}

Consider a quantum WGAN designed to learn an $n$-qubit target state with rank $r$; the generator hence consists of $r$ unitary matrices. Suppose that each unitary $U_i$ is a composition of at most $N$ fixed unitary gates. Furthermore, assume that $\phi$ and $\psi$ are parameterized as a linear combination of at most $M$ tensor products of Pauli matrices. The size of the network (the number of parameters) is thus $O(rNM)$.

The loss function consists of $3$ terms:
\begin{itemize}[leftmargin=4mm]
\item The expectation value of $\phi$ measured on the state $P$.
\item The expectation value of $\psi$ measured on the state $Q$.
\item The expectation value of $\xi_R$ measured on the state $P \otimes Q$.
\end{itemize}

The complexity of a quantum operation is quantified by the number of elementary gates required to be performed on a quantum computer. We show that a single measurement of $\phi$ on $U_i\rho_0U_i^\dagger$, $\psi$ on $Q$, and $\xi_R$ on $U_i\rho_0U_i^\dagger \otimes Q$ can be carried out using $\poly\left(n,k,N,M,\log\left(\frac{1}{\epsilon}\right)\right)$ gates.

The expectation values can then be estimated by computing the empirical expectation on a batch of measurements. These expectation values are combined as shown earlier in \append{estim-loss-funct} to obtain the expected values measured on $P$ and $P \otimes Q$.

First, $\xi_R$ can be approximated to precision $\epsilon$ via truncation of a Taylor series consisting of $\log\left(\frac{1}{\epsilon}\right)$ terms. Thus $\xi_R$ is approximated by a linear combination of $\poly\left(M,\frac{1}{\epsilon}\right)$ fixed Hermitian matrices of the form $\Sigma$ or $\SWAP \cdot \Sigma$ where each $\Sigma$ is a tensor product of 2-dimensional Hermitian matrices.

Second, by the Solovay-Kitaev theorem~\cite{dawson2005solovay}, any $n$-qubit unitary operator can be implemented to precision $\epsilon$ using $\poly\left(\log\left(n,\frac{1}{\epsilon}\right)\right)$ gates. Similarly, any fixed $n$-qubit Hermitian matrix can be measured using a circuit with $\poly\left(n,\log\left(\frac{1}{\epsilon}\right)\right)$ gates. Consequently:
\begin{itemize}[leftmargin=4mm]
\item $\psi$ can be measured on $Q$ using $M$ measurements of fixed tensor products of Pauli matrices, therefore using $\poly\left(n,M,\log\left(\frac{1}{\epsilon}\right)\right)$ gates.
\item $\phi$ can be measured on $U_i\rho_0U_i^\dagger$ for any $i$ using $M$ measurements of fixed tensor products of Pauli matrices, therefore using $\poly\left(n,M,\log\left(\frac{1}{\epsilon}\right)\right)$ gates.
\item $\xi_R$ can be measured on $U_i\rho_0U_i^\dagger \otimes Q$ for any $i$ using $\poly\left(M,\frac{1}{\epsilon}\right)$ measurements of fixed tensor products of Pauli matrices, therefore using $\poly\left(n,M,\log\left(\frac{1}{\epsilon}\right)\right)$ gates.
\item Each unitary $U_i$ can be applied by a composition of $N$ fixed unitaries, therefore using $\poly\left(n,N,\log\left(\frac{1}{\epsilon}\right)\right)$ gates.
\end{itemize}

From \append{estim-grad-directly}, it can be seen that the partial derivatives with respect to the parameters $p,\alpha,\beta$ are each computed by the same procedure as the loss function with some of the variables restricted. Furthermore, the partial derivatives with respect to $\theta_{i,j}$ can be evaluated using the circuit for $U_i$ with an ancillary register and a constant number of extra gates~\cite{schuld2019evaluating}. Each partial derivative therefore has the same complexity as the loss function. Since there are $O(rNM)$ parameters, the total gradient can be evaluated with a multiplicative overhead of $O(rNM)$ compared to evaluating the loss function.


\section{More Details on Experimental Results}
\label{append:numerical-results}

\paragraph{Pure states} We used the quantum WGAN to learn pure states consisting of $1,2,4$, and $8$ qubits. In this case, the generator is fixed to be a single unitary. The parameters to be chosen in the training are $\lambda$ (the weight of the regularizer) and $\eta_g,\eta_d$ (the learning rates for the discriminator and generator parameters, respectively). The training parameters for our experiments for learning pure states are listed in \tab{pure-states-noiseless}.
\begin{table}[H]
\begin{center}
\scalebox{0.7}{
\begin{tabular}{c|c|c|c|c}
\hline
\hline
Parameters & 1 qubit & 2 qubits & 4 qubits & 8 qubits\\
\hline
$\lambda $ & 2 & 2 & 10 & 10\\
$\eta =\eta_g=\eta_d $ & $10^{-1}$ & $10^{-1}$ & $10^{-1}$ & $10^{-2}$ \\
\hline
\end{tabular}}
\end{center}
\vspace{3mm}
\caption{Parameters for learning pure states.}\label{tab:pure-states-noiseless}
\end{table}

For 1,2, and 4 qubits, in addition to \fig{average-pure}, we also plot the average loss function for a number of runs with random initializations in \fig{average-pure-loss} which shows the numerical stability of our quantum WGAN.

\begin{figure}[!tb]
  \begin{subfigure}[b]{0.50\textwidth}
    \centering
  \includegraphics[width=5cm, height=3cm]{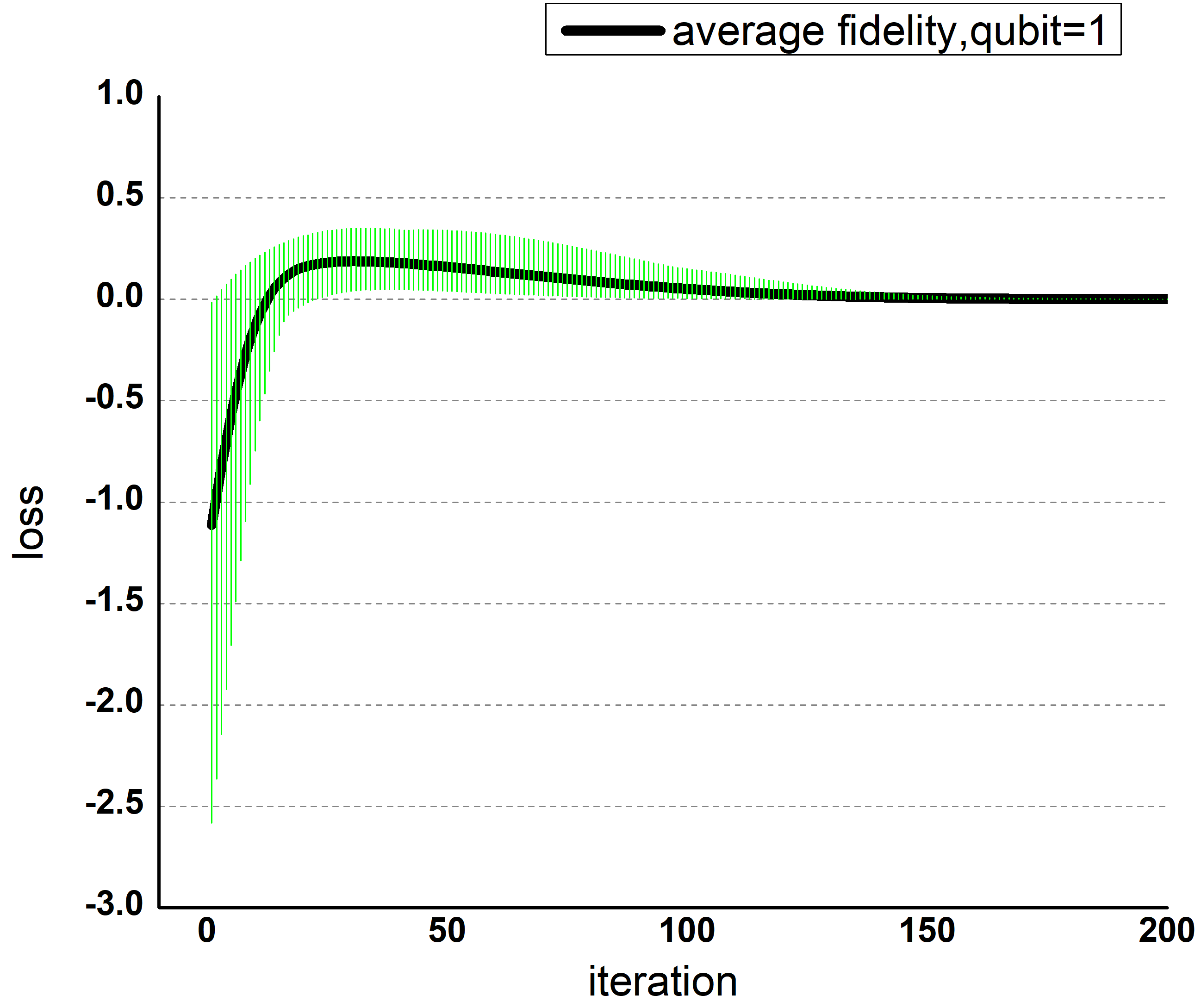}
  \caption*{1 qubit}
  \end{subfigure}\hfill
  \begin{subfigure}[b]{0.50\textwidth}
    \centering
  \includegraphics[width=5cm, height=3cm]{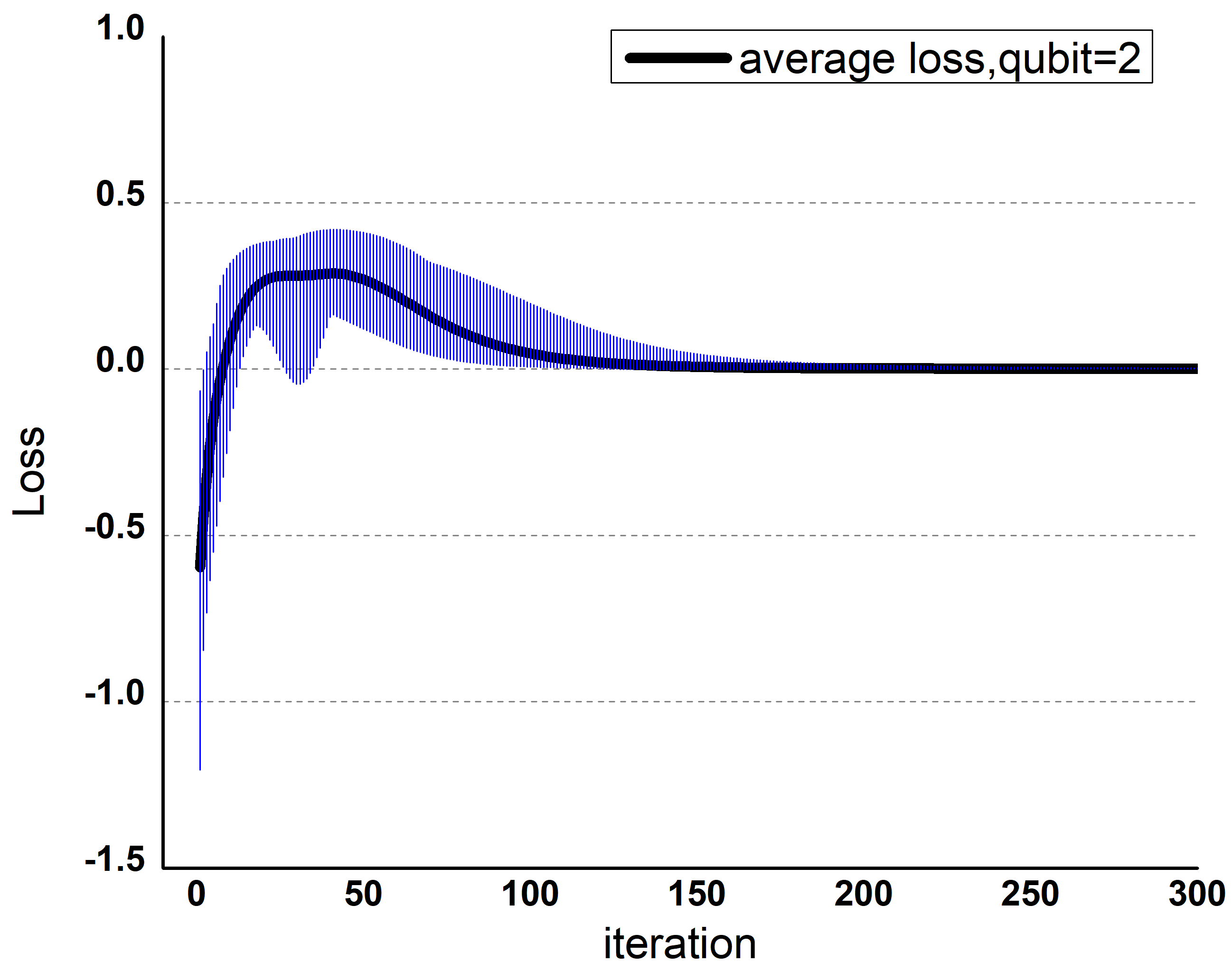}
  \caption*{2 qubits}
  \end{subfigure}\hfill
  \begin{subfigure}[b]{0.50\textwidth}%
    \centering
  \includegraphics[width=5cm, height=3cm]{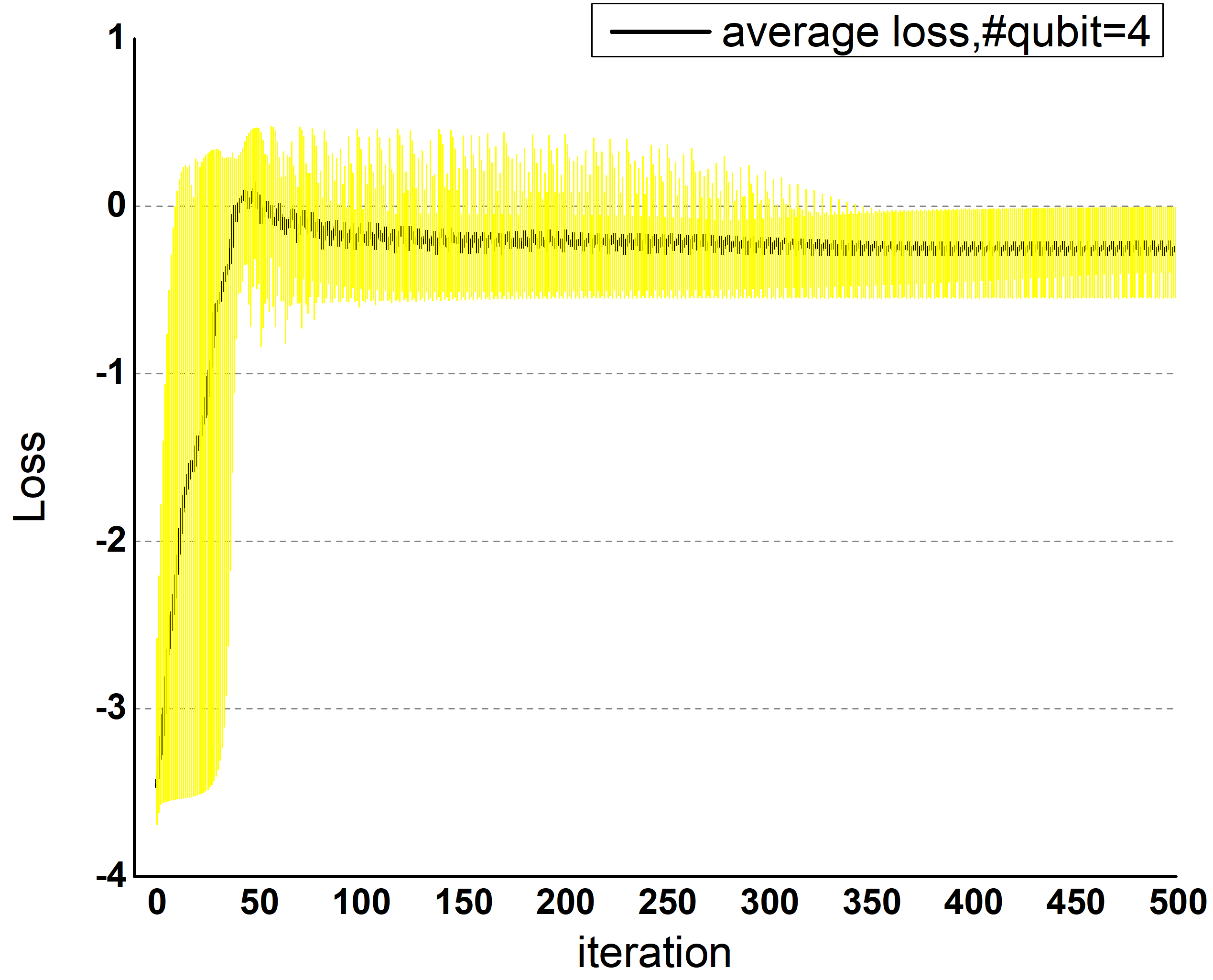}
  \caption*{4 qubits}
  \end{subfigure}\hfill
  \begin{subfigure}[b]{0.50\textwidth}%
    \centering
  \includegraphics[width=5cm, height=3cm]{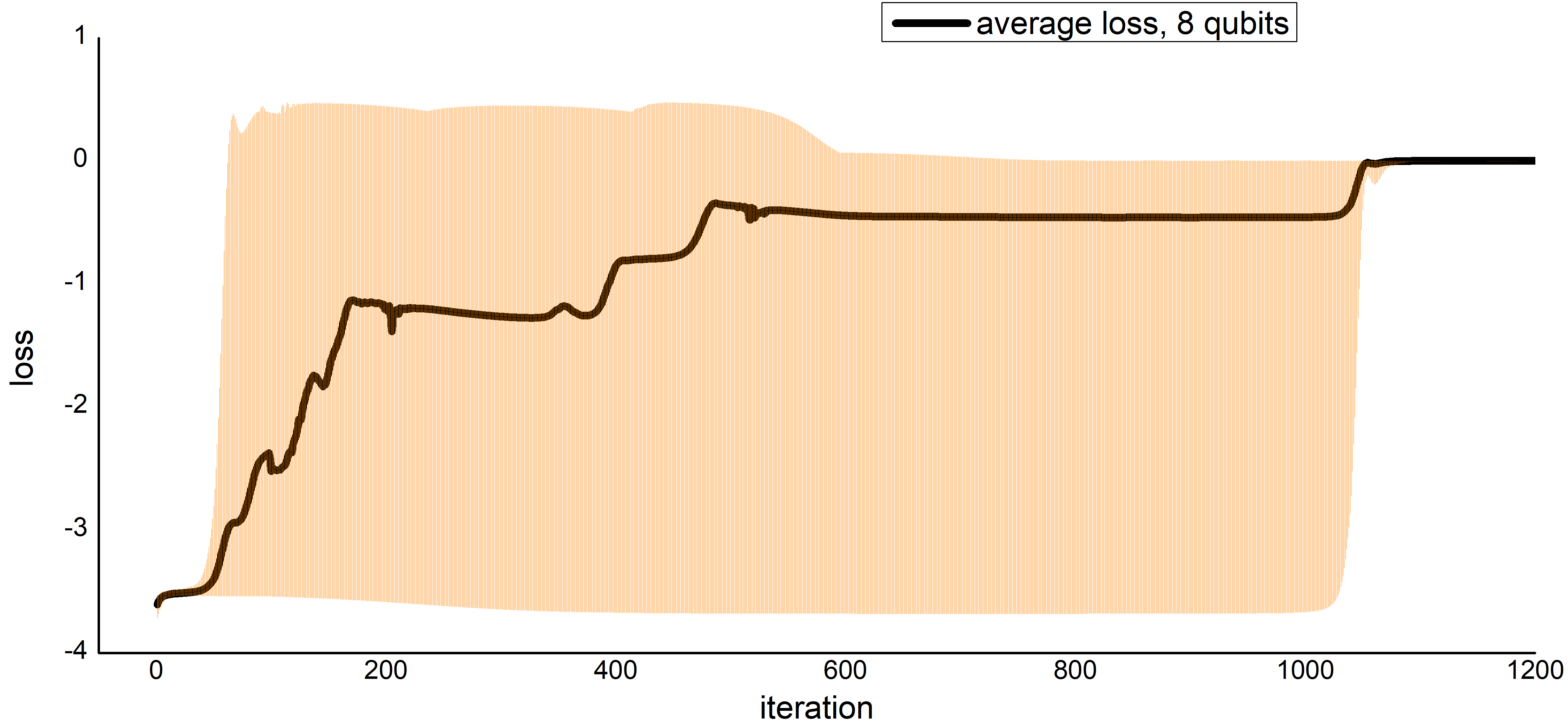}
  \caption*{8 qubits}
  \end{subfigure}
  \caption{Average performance of learning pure states (1, 2, 4 qubits) where the black line is the average loss over multi-runs with random initializations and the shaded area refers to the range of the loss.}
  \label{fig:average-pure-loss}
\end{figure}

\paragraph{Mixed states} We also demonstrate the learning of mixed quantum states of rank $2$ with $1,2$, and $3$ qubits in \fig{average-mixed}. The generator now consists of $2$ unitary operators, and 2 real probability parameters $p_1,p_2$ which are normalized to form a probability distribution using a softmax layer. The learning rate for the probability parameters is denoted by $\eta_p$. The training parameters are listed in \tab{mixed-states}.

\begin{table}[H]
\begin{center}
\scalebox{0.7}{\begin{tabular}{c|c|c|c}
\hline
Parameters & 1 qubits & 2 qubits & 3 qubits \\
\hline
\hline
$\lambda $ & 10 & 10  & 10  \\
  $\eta_{d},\eta_{g},\eta_{p}$ & $(10^{-1},10^{-1},10^{-1})$ & $(10^{-1},10^{-1},10^{-1})$ & $(10^{-1},10^{-1},10^{-1})$ \\
\hline
\end{tabular}}
\end{center}
\vspace{3mm}
\caption{Parameters for learning mixed states.}\label{tab:mixed-states}
\end{table}

\paragraph{Learning pure states with noise}
In a recent experiment result~\cite{ionq}, a quantum-classical hybrid training algorithm using the KL divergence between classical measurement outcomes as the loss function on the canonical Bars-and-Stripes data set was performed on an ion-trap quantum computer. Specifically, they use the generator in \fig{ionq-generator}. Even though the goal of~\cite{ionq} is to generate a classical distribution, we still deem it as a good example of practically implementable quantum generator to testify our quantum WGAN.

\begin{figure}[htbp]
\centering\hspace{0mm}
\[
\Qcircuit @C=1em @R=.7em {
&\gate{Z}&\gate{X}&\gate{Z}&\gate{XX}\qwx[1]&\gate{XX}\qwx[2]&\gate{XX}\qwx[3]&\qw&\qw&\qw&\qw\\
&\gate{Z}&\gate{X}&\gate{Z}&\gate{XX}&\qw&\qw&\gate{XX}\qwx[1]&\gate{XX}\qwx[2]&\qw&\qw\\
&\gate{Z}&\gate{X}&\gate{Z}&\qw&\gate{XX}&\qw&\gate{XX}&\qw&\gate{XX}\qwx[1]&\qw\\
&\gate{Z}&\gate{X}&\gate{Z}&\qw&\qw&\gate{XX}&\qw&\gate{XX}&\gate{XX}&\qw\\
}
\]
\caption{The generator circuit used in Ref.~\cite{ionq} where $Z$ stands for the $e^{i\theta\sigma_z}$ gate, $X$ stands for the $e^{i\theta\sigma_x}$ gate, and $XX$ stands for the $e^{i\theta\sigma_x \otimes \sigma_x}$ gate.}
\label{fig:ionq-generator}
\end{figure}

We use the same training parameters as in the noiseless case (\tab{pure-states-noiseless}). Furthermore, we add the sampling noise (modeled as a Gaussian distribution with standard deviation $\sigma$) which is a reasonable approximation of the noise for the ion-trap machine~\cite{Zhu}. Our results show that the quantum WGAN can still learn a 4-qubit mixed state in the presence of this kind of noise. As is to be expected, noise with higher degrees (i.e., higher $\sigma$) increases the number of epochs required before the state is learned successfully. The corresponding results are plotted in \fig{noisy-learning}.

Our finding also demonstrates the different outcomes between choosing different metrics as the loss function. In particular, some of the training results reported in~\cite{ionq} demonstrate a KL distance $<10^{-4}$ but the actual quantum fidelity is only about $0.16$.  On the other side, our quantum WGAN is guaranteed to achieve close-to-1 fidelity all the time.

\paragraph{Application: Approximating Quantum Circuits}
The quantum Wasserstein GAN can be used to approximate the behavior of quantum circuits with many gates using fewer quantum gates. Consider a quantum circuit $U_0$ over $n$ qubits. It is well known \cite{nielsen2002quantum} that there exists an isomorphism between $n$ qubit quantum circuits $U$ and quantum states $\Psi_{U}$ such that
\begin{align}
  \Psi_{U} &= \frac{1}{\sqrt{2^n}}\sum_{i=0}^{2^n - 1} (U \otimes \I)(\vec{e}_{i}\otimes \vec{e}_{i}) = \frac{1}{\sqrt{2^n}}\sum_{i=0}^{2^n - 1} (U(\vec{e}_{i})\otimes \vec{e}_{i}).
\end{align}
The quantum Wasserstein GAN can be used to learn a smaller quantum circuit $U_{1}$ such that $\Psi_{U_1}$ is close to $\Psi_{U_0}$. This can be done by setting the real state to $\Psi_{U_0}$, and using the GAN to learn to generate it using a circuit of the form $(U_1 \otimes \I)$ applied to $\frac{1}{\sqrt{2^n}}\sum_{i=0}^{2^n - 1} (\vec{e}_{i} \otimes \vec{e}_{i})$. The fidelity between $\Psi_{U_1}$ and $\Psi_{U_0}$ is given by the average output fidelity  for uniformly chosen inputs to $U_1$ and $U_0$.

We apply these techniques to the quantum circuit that simulates the evolution of a quantum system in the 1-dimensional nearest-neighbor Heisenberg model with a random magnetic field in the $z$-direction (considered in \cite{childs2018towards}). The time evolution for time $t$ is described by the unitary operator $e^{i\hat{H}t}$ with the Hamiltonian $\hat{H}$ given by
\begin{align}
  \label{eq:heisenberg-ham}
  \hat{H} = \sum_{j=1}^{n} \left(\sigma_x^{(j)}\sigma_x^{(j+1)} + \sigma_y^{(j)}\sigma_y^{(j+1)} + \sigma_z^{(j)}\sigma_z^{(j+1)} + h^{(j)}\sigma_z^{(j)}\right)
\end{align}
where $\sigma_{i}^{(j)}$ denotes the Pauli gate $\sigma_{i}$ applied at the $j^{th}$ qubit, and the $h^{(j)} \in [-h,h]$ are uniformly chosen at random.

We study the specific case with $t = n = 3$ and $h = 1$, with a fixed target error of $\epsilon = 10^{-3}$ in the spectral norm. Quantum circuits for simulating Hamiltonians that are represented as the sum of local parts, $e^{iHt} = e^{it \sum_{i=1}^{L} \alpha_j H_j }$, are obtained using $k^{th}$ order Suzuki product formulas $S_{2k}$ defined by
\begin{align}
  \label{eq:suzuki-formula}
  S_2(\lambda) &= \prod_{j=1}^{L} \exp(\alpha_j H_j \lambda /2) \prod_{j=L}^{1} \exp(\alpha_j H_j \lambda/2) \\
  S_{2k}(\lambda) &= \left[S_{2k-2}\left(p_{k}\lambda\right)\right]^2 S_{2k-2}\left(\left(1 - 4p_{k}\right)\lambda\right)^2\left[S_{2k-2}\left(p_{k}\lambda\right)\right]^2
\end{align}
where $p_k = 1/\left(4 - 4^{1/\left(2k-1\right)}\right)$ for $k \ge 1$.

We then approximate $e^{iHt}$ by $\left[S_{2k}\left(\frac{it}{r}\right)\right]^r$. Obtaining error $\epsilon$ in the spectral norm requires $r = \frac{(Lt)^{1 + 1/2k}}{\epsilon^{1/2k}}$. From \eq{suzuki-formula}, each evaluation of $S_{2k}$ requires $(2L)5^{k-1}$ gates of the form $e^{iH_j \theta}$ where $\theta$ is a real parameter. In the case of the Hamiltonian \eq{heisenberg-ham}, it is the sum of 12 terms each of which is the product of up to $2$ Pauli matrices. Thus the $k^{th}$ order formula $S_{2k}$ yields a circuit for simulating \eq{heisenberg-ham} requiring $(24)5^{k-1}\frac{(36)^{1 + 1/2k}}{0.001^{1/2k}}$ gates of the form $e^{i\theta \sigma}$ where $\sigma$ is a product of up to $2$ Pauli matrices. These are the gates used in the parameterization of our quantum Wasserstein GAN, and can be implemented easily on ion trap quantum computers. The smallest circuit is obtained using $S_2$ and requires $\sim 11900$ gates.

Using the quantum Wasserstein GAN for 6-qubit pure states, we discovered a circuit for the above task with 52 gates, an average output fidelity of $0.9999$, and a worst case error $0.15$. The worst case input is not realistic, and thus the 52 gate circuit provides a very reasonable approximation in practice.

\end{document}